\numberwithin{equation}{section} \makeatletter
\renewcommand{\subsection}{\@startsection
{subsection}{2}{0mm}{\baselineskip}{-0.25cm}
{\normalfont\normalsize\bf}} \makeatother
\newtheorem{theorem}{Theorem}[section]
\newtheorem{lemma}[theorem]{Lemma}
\newtheorem{definition}[theorem]{Definition}
\newtheorem{remark}[theorem]{Remark}
\newtheorem{proposition}[theorem]{Proposition}
\newtheorem{ass}[theorem]{Assumption}
\def \F {\mathcal F}
\def \G {\mathcal G}
\def \H {\mathcal H}
\def \L {\mathcal L}
\def \P {\mathbf P}
\def \Q {\mathbf Q}
\def \R {\mathbb R}
\def \I {{\mathbf 1}}
\def \bF {\mathbb F}
\def \bG {\mathbb G}
\def \bH {\mathbb H}
\def \bE {\mathbb E}
\def \bN {\mathbb N}
\newcommand{\ud}{\mathrm d}
\newcommand{\ds}{\displaystyle}
\newcommand{\espu}[2][\mathbb E^{\P_1}] {#1\left[#2\right]}
\newcommand{\espd}[2][\mathbb E^{\P_2}] {#1\left[#2\right]}
\newcommand{\espp}[2][\mathbb E^{\P^*}] {#1\left[#2\right]}
\newcommand{\espq}[2][\mathbb E^{\Q}] {#1\left[#2\right]}
\newcommand{\espqq}[2][\mathbb E^{\Q^*}] {#1\left[#2\right]}
\newcommand{\condespg}[2][\G_t]{\mathbb E^{\P_2}\left.\left[#2\right|#1\right]}
\newcommand{\condesphh}[2][\widetilde \H_t]       {\mathbb E^{\mathbf Q}\left.\left[#2\right|#1\right]}
\newcommand{\condespgnn}[2][\G^N_t]{\mathbb E^{\P_2} \left.\left[#2\right|#1\right]}
\newcommand{\doleans}[1] {\mathcal E\left(#1\right)}
\author[C.~Ceci]{Claudia  Ceci}
\author[K.~Colaneri]{Katia Colaneri}
\author[A.~Cretarola]{Alessandra Cretarola}
\begin{document}

\address{Claudia  Ceci, Department of Economics,
University ``G. D'Annunzio'' of Chieti-Pescara, Viale Pindaro, 42,
I-65127 Pescara, Italy.}\email{c.ceci@unich.it}

\address{Katia Colaneri, Department of Economics,
University ``G. D'Annunzio'' of Chieti-Pescara, Viale Pindaro, 42,
I-65127 Pescara, Italy.}\email{katia.colaneri@unich.it}

\address{Alessandra Cretarola, Department of Mathematics and Computer Science,
 University of Perugia, Via Vanvitelli, 1, I-06123 Perugia, Italy.}\email{alessandra.cretarola@unipg.it}

\title[Hedging of unit-linked life insurance contracts via LRM]{Hedging of unit-linked life insurance contracts
with unobservable mortality hazard rate via local risk-minimization}

\date{}

\begin{abstract}
\begin{center}
In this paper we investigate the local risk-minimization approach for a combined financial-insurance model where there are restrictions on the information available to the insurance company. In particular we assume that, at any time,  the insurance company may observe the number of deaths from a specific portfolio of insured individuals but not the mortality hazard rate. We consider a financial market driven by a general semimartingale  and we aim to hedge unit-linked life insurance contracts via the local risk-minimization approach under partial information.  The F\"{o}llmer-Schweizer decomposition of the insurance claim and explicit formulas for the optimal strategy  for {\em pure endowment}  and {\em term insurance} contracts are provided in terms of the projection of  the survival process on the information flow. Moreover, in a Markovian framework, we reduce to solve a  filtering problem with point process observations.
\end{center}
\end{abstract}

\maketitle

{\bf Keywords}: Local risk-minimization; partial information; unit-linked life insurance contracts;
minimal martingale measure

{\em 2010 Mathematical Subject Classification}: 60G35, 60G46, 60J25, 91B30, 91G10

{\em JEL Classification}: C02, G11, G22

\section{Introduction}
The paper addresses the problem of computing locally risk-minimizing hedging strategies for unit-linked life insurance contracts under partial information. In these contracts, insurance benefits depend on the price of a specific risky asset and payments are made according to the occurrence of some events related to the stochastic life-length of the policy-holder. In particular we consider a portfolio of $l_a$ insured individuals, having all the same age $a$. Hence, insurance contracts can be considered as contingent claims in an incomplete combined financial-insurance market model, defined on the product of two independent filtered probability spaces: the first one, denoted by $(\Omega_1,\F,\P_1)$ endowed  with a filtration $\bF:=\{\F_t, \ t\geq 0 \}$, is used to model the financial market, while the second one,  $(\Omega_2, \G, \P_2)$ endowed  with a filtration  $ \bG:=\{\G_t, \ t\geq 0 \}$,  describes the insurance portfolio.

In general, incompleteness occurs when the number of assets traded on the market is lower than that of random sources, see e.g. \cite[Chapter 8]{bt}. This is, for instance, the case of insurance claims which are linked to both financial markets and other sources of randomness that are stochastically independent of the financial markets.

Since we consider an insurance market model which is independent of the underlying financial market, we apply the (local) risk-minimization approach for deriving hedging strategies that reduce the risk. This is a quadratic  hedging method which keeps the replication constraint and looks for  hedging strategies (in general not self-financing) with minimal cost.

The concept of risk-minimizing hedging strategies was introduced
in \cite{fs86} in the financial framework only. At the beginning it was formulated assuming that the historical probability measure was a martingale measure. In this context, some results were obtained in the case of full information by  \cite{fs86} and \cite{s01}, and under partial information by \cite{s94} using projection techniques, and more recently in \cite{ccr1} where the authors provide a suitable Galtchouk-Kunita-Watanabe decomposition of the contingent claim that works in a partial information framework. When the historical probability measure does not furnish the martingale measure, i.e. the asset prices dynamics are semimartingales, the theory of risk minimization does not hold and a weaker formulation, namely {\em local risk-minimization}, is required (see e.g. \cite{fs1991,s01}).

Concerning the local risk-minimization approach under partial information, there are less results in the literature, as far as we are aware. In particular, we mention: 
\cite{fs1991},  where the optimal strategy is obtained via predictable projections and
enlargements of filtrations that make the financial market complete; \cite{ccr2}, which provides a suitable F\"{o}llmer-Schweizer decomposition of the contingent claim; \cite{ccc2}, where the authors derive a full description of the optimal strategy under some conditions on the filtrations; an application to the case of defaultable markets in the sense of \cite{fs1991} can be found in \cite{bc}.\\
The (local) risk-minimization approach has also been investigated under the so-called {\em benchmark approach}, a modeling framework that employs the num\'{e}raire portfolio as reference unit, instead of the riskless asset. More precisely, in \cite{bcp} the authors consider the full information setting, the restricted information case is studied  in \cite{CCC}, and finally in \cite{dp} the authors propose a different formulation of the problem for contingent claims which are not square-integrable.

The theory of (local) risk-minimization has been recently extended to the insurance framework, where the market is affected by both mortality and catastrophic risks.
In \cite{moll98} and \cite{vv} the authors study the hedging problem of unit-linked life insurance contracts  in the Black \& Scholes and in the L\'evy financial market model respectively, under full information on both the insurance market and the financial one. In particular, the authors of \cite{vv} discuss the same model analyzed in \cite{ries06}. Moreover, in  \cite{CCC} the problem is solved for a general semimartingale financial model under partial information on the financial market using the benchmark approach. In all these papers lifetimes of the $l_a$ insured individuals are  modeled as i.i.d. non-negative random variables with known hazard rate.

The novelty of this paper consists on considering a combined financial-insurance model where there are restrictions on the information concerning the insurance market. As a matter of fact, we assume that, at time $t$, the insurance company may observe the total number of deaths $N_t$ occurred  until $t$ but not the mortality hazard rate, which depends on an unknown stochastic factor $X$. More precisely, we assume that lifetimes of each individual are conditionally independent, given the whole filtration generated by $X$, $\G^X_{\infty}:=\sigma\{X_u, u \geq 0\}$, and with the same hazard rate process $\lambda_a(t, X_t)$. Denoting by  $\bG^N$ the filtration generated by the process $N$ which counts the number of deaths, then on the combined financial-insurance market 
the information flow available to the insurance company is formally described  by the filtration
$\widetilde \bH := \bF \otimes \bG^N \subseteq  \bH:=\bF\otimes\bG.$

The financial market, on which the insurance company has full knowledge,  consists on a riskless asset with (discounted) price identically equal to $1$ and a risky asset whose (discounted) price $S$ is represented by a semimartingale satisfying the so-called {\em structure condition}, see \eqref{eq:SC}. Since the insurance company's decisions are based on the information flow $\widetilde \bH$, we will look for admissible investment strategies $\psi=(\theta, \eta)$, where the process $\theta$, which describes the amount of wealth invested in the risky asset, is supposed to be $\widetilde \bH$-predictable, whereas the process $\eta$, providing the component invested in the riskless asset, is $\widetilde \bH$-adapted.

We consider two basic forms of insurance contracts, so-called {\em pure endowment} and {\em term insurance}. The policy-holder
of a {\em pure endowment}  contract  receives the payoff  $\xi$ of a contingent claim at a fixed time $T$, if she/he is still alive at this time, while the {\em term insurance} contracts state that the sum insured is due immediately upon death before time $T$. Precisely, payments can occur at any time during $[0,T]$ and are assumed to be time dependent of the form $g(t, S_t)$. In this case the generated obligations are not contingent claims at a fixed time $T$, however they can be transformed into general {\em $T$-claims} by deferring the payments to time $T$.
Under suitable assumptions the payoff of the resulting {\em insurance claim}, denoted by  $G_T$, in both cases is a square-integrable $\widetilde \H_T$-measurable random variable and since the traded asset $S$ turns to be
$\widetilde \bH$-adapted, we can write the F\"{o}llmer-Schweizer decomposition of the random variable $G_T$ with respect to $S$ and $\widetilde \bH$. By applying the results of \cite{s01} and \cite{ccc2} we characterize the pseudo-optimal strategy as the integrand in the  F\"{o}llmer-Schweizer decomposition and the optimal value process as the conditional expected value of the insurance claim $G_T$ with respect to the {\em minimal martingale measure}, given the information flow $\widetilde \H_t$.

In particular, we furnish an explicit formula for the pseudo-optimal strategy in terms of the $\bG^N$-projection of the survival process,
defined in our framework, as ${}_tp_s := \P_2\left(T_i > s+t \ | \ \{T_i>s\} \cap \G^X_\infty\right)$.
In a Markovian setting, its $\bG^N$-optional projection, ${}_t\widehat{p}_s$,  can be written by means of the filter $\pi$, that provides the conditional law of the stochastic factor $X$ given the observed history  $\bG^N$. As a consequence, the computation of the optimal strategy and the optimal value process lead to solve a filtering problem with point process observations.

The literature concerning filtering problems is quite rich, and in particular we can distinguish three main subjects related to different dynamics of the observation process: continuous, counting and mixed type observations.
Counting type observation, which is that considered also in this paper, has been analyzed by \cite{cg97} in the framework of branching processes, and by \cite{cg2000,cg01} for pure jump state processes. An explicit representation of the filter is obtained  in \cite{cg06} by the Feynman-Kac formula using the linearization method introduced by \cite{KMM}. We use this technique to achieve a similar result in our context.
For completeness we indicate some references concerning continuous observation case,  \cite{K, KO, LiS}, and more recently mixed type observation has been studied in \cite{cco1, cco2,fs2012, GrMi}.

The paper is organized as follows. In Section 2 we describe the financial market model. Section 3 is devoted to the insurance market model. In Section 4 we introduce the combined financial-insurance model. The  local risk-minimization is discussed in Section 5.
The F\"{o}llmer-Schweizer decomposition and explicit formulas for the optimal strategy for both {\em pure endowment} and {\em term insurance} contracts are contained in Section 6 and 7, respectively.
Finally, the computation of the survival process and some other technical results are gathered in the Appendix.

\section{The financial market}\label{sec:financial_market}
Let $(\Omega_1,\F,\P_1)$ be a probability space endowed with a filtration $\bF:=\{\F_t, \ t\geq 0 \}$ that satisfies the usual conditions of
right-continuity and completeness; by convention, we set $\F=\F_{\infty}$ and $\F_{\infty-}=\bigvee_{t\geq 0}\F_t$, see e.g. \cite{js}. We consider a simple financial market model where we can find one riskless asset with (discounted) price $1$ and a risky asset whose (discounted) price is represented by an $\R$-valued square-integrable c\`{a}dl\`{a}g $(\bF,\P_1)$-semimartingale $S=\{S_t,\ t \ge 0\}$ that satisfies the so-called {\em structure condition}, i.e.
\begin{equation} \label{eq:SC}
S_t = S_0 + M_t + \int_0^t\alpha_u  \ud \langle M \rangle_u,\quad t \geq 0,
\end{equation}
where $S_0 \in L^2(\F_0,\P_1)$\footnote{The space $L^2(\F_0,\P_1)$ denotes the set of all $\F_0$-measurable random variables $H$ such that $\espu{|H|^2}= \int_\Omega |H|^2\ud {\P_1} < \infty$, where $\mathbb E^{\P^1}[\cdot]$ refers to the expectation computed under the probability measure $\P_1$.}, $M=\{M_t,\ t \geq 0\}$ is an $\R$-valued square-integrable (c\`{a}dl\`{a}g) $(\bF,\P_1)$-martingale
starting at null, $\langle M\rangle=\{\langle M,M\rangle_t,\ t \geq 0\}$ denotes its $\bF$-predictable quadratic variation process and
$\alpha=\{\alpha_t, \ t \geq 0\}$ is an $\R$-valued $\bF$-predictable process such that $\int_0^{T^*}\alpha_s^2 \ud \langle
M\rangle_s < \infty$ $\P_1$-a.s., for each $T^* \in (0,\infty)$.

We assume throughout this paper:
\begin{equation} \label{eq:int_cond}
\espu{\int_0^{T^*}\alpha_u^2 \ud \langle M \rangle_u}<\infty, \quad \forall \ T^*\in (0,\infty).
\end{equation}
Without further mention, all subsequently appearing quantities will be expressed in discounted units.
Since it will play a key role in finding locally risk-minimizing strategies, for reader's convenience we recall the concept of {\em minimal martingale measure} $\P^*$, in short MMM, that is, the unique equivalent martingale measure for $S$ (i.e. $S$ is an $(\bF, \P^*)$-martingale) with the property that $(\bF, \P_1)$-martingales strongly orthogonal to $M$, are also $(\bF, \P^*)$-martingales.

\begin{definition} \label{def:MMM}
Suppose that $S$ satisfies the structure condition. An equivalent martingale measure $\P^*$ for $S$ with square-integrable density $\ds \frac{\ud \P^*}{\ud \P_1}$ is called {\em minimal martingale measure} (for $S$) if $\P^*=\P_1$ on $\F_0$ and if every square-integrable $(\bF, \P_1)$-martingale, strongly orthogonal to the $\P_1$-martingale part of $S$, is also an $(\bF, \P^*)$-martingale.
\end{definition}
If we assume that
  \begin{equation*}\label{eq:hpAS2}
  1-\alpha_t \Delta M_t>0 \quad \P_1-\mbox{a.s.}, \quad \forall \ t \geq 0
  \end{equation*}
and
 \begin{equation}\label{eq:square_integrability_L}
\espu{\exp\left\{\frac{1}{2}\int_0^{T^*} \alpha_t^2 \ud \langle M^c\rangle_t+ \int_0^{T^*} \alpha_t^2 \ud \langle M^d \rangle_t
\right\}} < \infty, \quad \forall \ T^* \in (0,\infty),
  \end{equation}
 where $M^c$ and $M^d$ denote the continuous and the discontinuous parts of the $(\bF,\P_1)$-martingale $M$ respectively and $\alpha$ is given in
 \eqref{eq:SC},
  then by the Ansel-Stricker Theorem, see~\cite{AS}, there exists the MMM $\P^*$ for $S$, which is defined thanks to the density process $L=\{L_t,\ t \geq 0 \}$ given by
\begin{equation}\label{eq:MMMpartial}
L_t:=\left.\frac{\ud \P^*}{ \ud \P_1}\right|_{\F_t}=\doleans{-\int \alpha_u \ud M_u}_t,\quad \forall \ t \geq 0,
\end{equation}
where the notation $\mathcal{E}(Y)$ refers to the Dol\'{e}ans-Dade exponential of an $(\bF,\P_1)$-semimartingale $Y$.

We observe that condition \eqref{eq:square_integrability_L} implies that the nonnegative $(\bF, \P_1)$-local martingale $L$ is indeed a square-integrable
$(\bF, \P_1)$-martingale, see e.g.~\cite{ps2008}.

\section{The insurance market}

The insurance market model is described by a filtered probability space $(\Omega_2, \G, \P_2, \bG)$, where $\bG:=\{\G_t, t\geq 0\}$ is a complete and right-continuous filtration, $\G=\G_\infty$ and $\G_{\infty-}=\bigvee_{t \geq 0} \G_t$. We consider a set of individuals of the same age $a$. Then, we select at random a sample of $l_a$ people. We analyze the case where lifetimes of people in the sample are affected by an unknown stochastic factor, which is represented by a c\`{a}dl\`{a}g stochastic process $X=\{X_t, \ t \geq 0 \}$. Therefore, also inspired by~\cite{fs2012}, we make the following assumption.

\begin{ass} \label{ass:cond_indep_lifetimes}
The remaining lifetimes $T_1,..., T_{l_a}$ of each individual are conditionally independent, doubly-stochastic random times all having the same hazard rate process $\lambda_a:=\{\lambda_a(t,X_t), t \geq 0 \}$. More precisely, we set $\G^X_{\infty}:=\sigma\{X_u, u \geq 0\}$; then there is a measurable function $\lambda_a:[0,\infty)\times \R \to (0,\infty)$, such that
\begin{equation}\label{eq:integrabilita_intensita}
\espd{ \int_0^{T^*}\lambda_a(u,X_u)\ud u} < \infty, \quad \forall \ T^* \in (0,\infty)
\end{equation}
and for all $t_1,t_2,\ldots, t_{l_a} \geq 0$
\begin{equation}\label{eq:intensita}
\P_2\left(T_1>t_1,\ldots, T_{l_a}>t_{l_a}|\G^X_\infty\right)=\prod_{i=1}^{l_a}\exp\left(-\int_0^{t_i}\lambda_a(s, X_s) \ud s\right).
\end{equation}
\end{ass}
Here, $\espd{\cdot}$ denotes the expectation computed under the probability measure $\P_2$.
Note that Assumption \ref{ass:cond_indep_lifetimes} implies that $T_i\neq T_j$ $\P_2-a.s.$ for all $i\neq j$.

We set $R^i_t:=\I_{\{T_i\leq t\}}$, for every $t \geq 0$ and define the filtration $\bG^R:=\{\G^R_t, \ t \geq 0\}$ generated by the vector process $(R^1,\ldots,R^{l_a})$, i.e.
\begin{equation}\label{eq:G^R}
\G^R_t:=\sigma\{R^i_s,\ 0\leq s\leq t,\ i=1,\ldots,l_a\}, \quad \forall \ t \geq 0.
\end{equation}
\begin{remark}\label{rem:Ri}
The process $\widetilde R^i=\{\widetilde R^i_t, \ t\geq 0 \}$, given by $\widetilde R^i_t:=R^i_t-\int_0^{t\wedge T_i} \lambda_a(s, X_s) \ud s$ for every $t \geq 0$, is a $(\bG, \P_2)$-martingale.
\end{remark}
Then, we denote by $N=\{N_t, \ t\geq 0\}$ the process that counts the number of deaths. More precisely, for every $t \geq 0$,
\[
N_t:=\sum_{i=1}^{l_a} \I_{\{T_i\leq t\}}
\]
counts the number of deaths in the time interval $[0,t]$.

Now, let $\bG^N:=\{\G_t^N, t \geq 0 \}$  be the natural filtration of $N$, i.e. $\G_t^N=\sigma\{N_u,\ 0 \leq u \leq t\}$. It is worth stressing that in general $\bG^N\subseteq \bG^R$. Furthermore, we remark that all filtrations are supposed to satisfy the usual conditions.

Note that, as a consequence of Remark \ref{rem:Ri}, $N$ has $(\bG, \P_2)$-predictable intensity $\Lambda=\{\Lambda_t, \ t \geq 0\}$ with
\begin{equation}\label{eq:intensita_N}
\Lambda_t= (l_a- N_{t^-}) \lambda_{a}(t, X_{t^-}), \quad t \geq 0.
 \end{equation}
Hence the process $\widetilde N=\{\widetilde N_t, \ t \geq 0\}$, given by $\widetilde N_t= N_t-\int_0^t\Lambda_r \ \ud r $ for each $t \geq 0$, turns to be a $(\bG,\P_2)$-martingale.
In particular this implies that for every $t \geq 0$,
\[
\espd{\int_0^t\Lambda_r \ud r}=\espd{N_t}\leq l_a.
\]
We introduce the so-called filter $\pi(F)=\{\pi_t(F),\ t \ge 0\}$ defined as
\begin{equation} \label{filter}
\pi_t(F) : =  \espd{F(t,X_t) | \G^N_t }, \quad \forall \ t \geq 0,
\end{equation}
for any measurable function $F(t,x)$ such that $ \espd{ |F(t,X_t) |}< \infty$, for every $t\geq0$.  It is known  that $\pi(F)$ is a probability measure-valued process with c\`{a}dl\`{a}g trajectories (see e.g. \cite{KO}), which provides the conditional distribution of $X$, given the observation $\bG^N$. We denote by $\pi_{t^-}(F)$ its left version.

Then, the $(\bG^N, \P_2)$-predictable intensity of $N$ is $\{\pi_{t^-}(\Lambda), t \geq 0\}$, where
$\pi_{t^-}(\Lambda) = (l_a- N_{t^-})  \pi_{t^-}(\lambda_a)$. Indeed, by Jensen's inequality and \eqref{eq:intensita} we get
\begin{equation}\label{eq:integrab_intensita2}
  \espd{ \int_0^{T^*} \pi_{t^-}(\Lambda) \ud t} \leq \espd{ \int_0^{T^*}\Lambda_t \ud t} \leq l_a  < \infty, \quad \forall \ T^* \in (0,\infty),
  \end{equation}
and for any bounded $(\bG^N, \P_2)$-predictable process $\vartheta=\{\vartheta_t, \ t \geq 0 \}$, we have
\[\espd{ \int_0^{T^*} \vartheta_t  \ud N_t} = \espd{ \int_0^{T^*} \vartheta_t  \Lambda_t  \ud t} = \espd{ \int_0^{T^*} \vartheta_t  \pi_t(\Lambda) \ud t}=
\espd{ \int_0^{T^*} \vartheta_t  \pi_{t^-} (\Lambda) \ud t}, \]
for every $\ T^* \in (0,\infty)$, where the second equality follows by conditioning with respect to the $\sigma$-algebra $\G^N_t$.
Therefore the process $N^*=\{N^*_t,\ t \geq 0\}$, given by $N^*_t:=N_t-\int_0^t\pi_{r^-}(\Lambda)\ud r$, is a $(\bG^N, \P_2)$-martingale.

We define the {\em survival process}
\begin{equation}\label{eq:defn_surv_proc}
{}_tp_s := \P_2\left(T_i > s+t \ | \ \{T_i>s\} \cap \G^X_\infty\right), \quad \forall \ t \geq 0, \ s\geq 0,
\end{equation}
which is a generalization to our context of the well-known {\em survival function} for i.i.d. lifetimes, see e.g. \cite{moll98, moll01, vv} for more details.
It is possible to show that
\begin{equation}\label{eq:surv_proc}
{}_tp_s \I_{\{T_i>s\}} = e^{-\int_s^{s+t}\lambda_a(u, X_u) \ud u}\I_{\{T_i>s\}}, \quad \forall \ t\geq 0, \ s\geq 0.
\end{equation}
See Lemma \ref{lemma:surv} in Appendix \ref{app:survival_process} for the proof.

\section{The combined financial-insurance model}
We now introduce the market model generated by the economy and the insurance portfolio.
In the sequel, we make the following hypothesis.
\begin{ass}\label{ass:indipendenza_assicurazioni}
The insurance market is independent of the a priori given financial market.
\end{ass}
Note that Assumption \ref{ass:indipendenza_assicurazioni} is not restrictive since it covers a wide class of realistic scenarios.
Mathematically, Assumption \ref{ass:indipendenza_assicurazioni} allows to consider the combined financial-insurance model on the following product probability space: $(\Omega, \H, \Q)$, where $\Omega:=\Omega_1\times \Omega_2$, $\H := \F \otimes \G$ and $\Q := \P_1\times\P_2$, endowed with the filtration $\bH:=\bF\otimes\bG$. 

We assume that at any time $t$, the insurance company has a complete information on the financial market and knows the number of policy-holders who are still alive but it cannot observe the hazard rate process $\lambda_a$.  Therefore, the available information flow for the insurance company is formally described by the filtration
\[
\widetilde \bH := \bF \otimes \bG^N.
\]

\begin{remark}\label{rem:variabili_aleatorie}
We observe that for all random variables $Z$ defined on the probability space $(\Omega, \H, \Q)$ there exist two random variables $Z_1$ and $Z_2$ defined  on $(\Omega_1, \F, \P_1)$ and $(\Omega_2, \G, \P_2)$ respectively, such that
$$
Z(\omega_1, \omega_2)=Z_1(\omega_1)Z_2(\omega_2), \quad \forall \ (\omega_1, \omega_2)\in \Omega.
$$
\end{remark}

Finally, by construction, the risky asset price process $S$ turns out to be an $(\bH, \Q)$-semimartingale and an $(\widetilde \bH, \Q)$-semimartingale as well, whose decomposition is given by \eqref{eq:SC} where $S_0 \in L^2(\H_0,\Q)$, $M$ also turns out to be an $\R$-valued square-integrable (c\`{a}dl\`{a}g) $(\bH,\Q)$-martingale and an $(\widetilde \bH, \Q)$-martingale as well.

We put
\begin{equation} \label{def:mmm}
\Q^*:=\P^*\times \P_2
\end{equation}
in the sequel and note that, by construction, $\Q^*$ is an equivalent martingale measure for $S$ with respect to both of the filtrations $\bH$ and $\widetilde \bH$. In particular, it turns out to be the minimal martingale measure for the combined financial-insurance model, as shown in the following lemma.

\begin{lemma}
The minimal martingale measure for $S$ with respect to $\bH$ (and $\widetilde \bH$) exists and coincides with $\Q^*$ given in \eqref{def:mmm}.
\end{lemma}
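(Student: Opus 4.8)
The plan is to verify the two defining properties of the minimal martingale measure from Definition \ref{def:MMM} for the product measure $\Q^*=\P^*\times\P_2$, first with respect to the large filtration $\bH$ and then observe that the argument restricts to $\widetilde\bH$. First I would recall that under $\Q=\P_1\times\P_2$ the semimartingale $S$ keeps the structure condition decomposition \eqref{eq:SC}, with $M$ remaining an $(\bH,\Q)$-martingale (as already noted just above the statement) and with the same integrand $\alpha$. The Dol\'eans-Dade density $L$ in \eqref{eq:MMMpartial} depends only on the financial coordinate $\omega_1$, so by the product structure (Remark \ref{rem:variabili_aleatorie}) and independence, the density $\frac{\ud\Q^*}{\ud\Q}$ equals $L_T$ viewed as an $\H_t$-measurable random variable, and it is automatically square-integrable and strictly positive because $L$ is, thanks to \eqref{eq:square_integrability_L}.

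The two things to check are: (i) $\Q^*$ is an equivalent martingale measure for $S$ with respect to $\bH$, i.e. $S$ is an $(\bH,\Q^*)$-martingale; and (ii) every square-integrable $(\bH,\Q)$-martingale strongly orthogonal to $M$ remains an $(\bH,\Q^*)$-martingale. For (i), the natural route is to show that $L\,S$ (or rather $L$ times the martingale part) is an $(\bH,\Q)$-martingale; since $L$ lives on the financial space and $\P^*$ is already the MMM on $(\Omega_1,\F,\P_1)$ turning $S$ into an $(\bF,\P^*)$-martingale, the product/independence structure transfers this to $(\bH,\Q^*)$. Concretely I would use the abstract Bayes formula: for $s\le t$, $\espqq{S_t\mid\H_s}=\frac{1}{L_s}\espq{L_t S_t\mid\H_s}$, and factor the conditional expectation across the two independent coordinates to reduce it to the known $\P^*$-martingale property of $S$ on the financial space.

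For (ii), let $\Psi$ be a square-integrable $(\bH,\Q)$-martingale strongly orthogonal to $M$, meaning $\langle\Psi,M\rangle=0$. The key computation is that the $\Q^*$-martingale property of $\Psi$ is equivalent, again by Bayes, to $L\Psi$ being a $(\bH,\Q)$-martingale, and since $L=\doleans{-\int\alpha\,\ud M}$ satisfies $\ud L_t=-L_{t^-}\alpha_t\,\ud M_t$, integration by parts gives $\ud(L_t\Psi_t)=L_{t^-}\,\ud\Psi_t+\Psi_{t^-}\,\ud L_t+\ud[L,\Psi]_t$, where the bracket term is driven by $\ud[M,\Psi]$ and vanishes in its predictable compensator precisely because $\langle M,\Psi\rangle=0$. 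Hence $L\Psi$ is a local $(\bH,\Q)$-martingale, and a square-integrability/uniform-integrability argument (using \eqref{eq:square_integrability_L} and the square-integrability of $\Psi$) upgrades it to a true martingale, which is the assertion.

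The main obstacle I expect is the orthogonality-preservation step (ii): one must be careful that strong orthogonality with respect to the enlarged filtration $\bH$ (not just $\bF$) still forces the covariation bracket $[M,\Psi]$ to have vanishing $\bH$-compensator, and that the Ansel--Stricker type integrability conditions survive the passage to the product space so that the local martingale $L\Psi$ is genuinely a martingale rather than a strict local one. Once (i) and (ii) are established for $\bH$, the same density $L$ and the same computations apply verbatim with $\bH$ replaced by $\widetilde\bH=\bF\otimes\bG^N\subseteq\bH$, since $S$, $M$, and $L$ are all $\widetilde\bH$-adapted; the uniqueness of the MMM (Definition \ref{def:MMM}) then identifies $\Q^*$ as the minimal martingale measure in both filtrations, completing the proof.
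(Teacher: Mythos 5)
Your proof is correct in outline, but it takes a genuinely different route from the paper's. You verify the two defining properties of the MMM directly on the product space: you identify the density $\ud\Q^*/\ud\Q$ with the Dol\'eans-Dade exponential $L$ of \eqref{eq:MMMpartial} lifted to $(\Omega,\H,\Q)$, use the abstract Bayes formula to reduce the $(\bH,\Q^*)$-martingale property of $S$ and of an orthogonal martingale $\Psi$ to the $(\bH,\Q)$-martingale property of $LS$ and $L\Psi$, and then run the integration-by-parts argument $\ud(L\Psi)=L_{-}\ud\Psi+\Psi_{-}\ud L+\ud[L,\Psi]$ together with $\langle M,\Psi\rangle=0$ to kill the bracket's compensator. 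The paper instead exploits the product structure more aggressively: it factors any $(\bH,\Q)$-martingale as $O=O^1O^2$ with $O^1$ an $(\bF,\P_1)$-martingale and $O^2$ a $(\bG,\P_2)$-martingale (Remark \ref{rem:variabili_aleatorie} and independence), shows $\langle O,M\rangle=O^2\langle O^1,M\rangle$ so that strong orthogonality of $O$ to $M$ forces $O^1\perp M$, and then simply invokes the fact that $\P^*$ is already the MMM on the financial space to conclude $O^1$ (hence $O$) stays a martingale under $\Q^*$, since $O^2$ is untouched by the measure change. The paper's route is shorter because it never reopens the Girsanov computation, at the price of leaning on the factorization of product-space martingales; your route is more self-contained and avoids that factorization entirely, at the price of having to redo the local-to-true martingale upgrade for $L\Psi$ (which you correctly flag as the delicate step — it does go through here because $L$ is a square-integrable martingale by \eqref{eq:square_integrability_L} and $\Psi$ is square-integrable, but you should state explicitly how you deduce class (D) or uniform integrability of $L\Psi$ rather than leaving it as an expected obstacle). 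Both arguments restrict to $\widetilde\bH$ for the same reason: all the relevant processes are $\widetilde\bH$-adapted.
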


\begin{proof}
Firstly, we observe that an $(\bH, \Q)$ martingale $O=\{O_t,\ t \ge 0\}$ can be written as $O=O^1 O^2$, where $O^1=\{O^1_t, \ t \geq 0\}$ is an $(\bF, \P_1)$-martingale and
$O^2=\{O^2_t, \ t \geq 0\}$ is a $(\bG, \P_2)$-martingale. Indeed, it is easy to check that every stochastic process living on the the probability space $(\Omega, \H, \Q)$ can be written as the product of two stochastic processes defined on $(\Omega_1, \F, \P_1)$ and $(\Omega_2, \G, \P_2)$ respectively, see also Remark \ref{rem:variabili_aleatorie}. Moreover,
for all $0\leq s<t<\infty$ 
we get
\begin{align*}
\espq{O_t|\H_s}&=
\espq{O^1_t O^2_t|\H_s}
=\espq{O^1_t|\H_s}\espq{O^2_t|\H_s}=\espu{O^1_t |\F_s}\espd{O^2_t |\G_s},
\end{align*}
thanks to Assumption \ref{ass:indipendenza_assicurazioni}.
This implies that $O$ is an $(\bH, \Q)$-martingale if and only if $O^1$ is an $(\bF, \P_1)$-martingale and $O^2$ is a $(\bG, \P_2)$-martingale. Here, $\espq{\cdot|\H_t}$ stands for the conditional expectation with respect to $\H_t$ computed under the probability measure $\Q$ and so on.

Now, let $O$ be a square-integrable $(\bH, \Q)$-martingale strongly orthogonal to $M$. By Assumption \ref{ass:indipendenza_assicurazioni} we have that
\begin{align*}
0 = \langle O,M\rangle_t & = 
\langle O^1O^2, M\rangle_t = O^2_t\langle O^1, M\rangle_t 
\end{align*}
for every $t \geq 0$, and this implies that $O^1$ is strongly orthogonal to $M$.
Since $\P^*$ is the MMM for $S$ with respect to $\bF$, then
$O^1$ is an $(\bF, \P^*)$-martingale, and consequently $O$ is an $(\bH, \Q^*)$-martingale, since $O^2$ is not affected by the change of measure from $\Q$ to $\Q^*$.
This proves that $\Q^*$ is the MMM for $S$ with respect to $\bH$. Finally, note that the same can be done with the filtration $\widetilde \bH$ instead of $\bH$.
\end{proof}

At time $t=0$ the insurance company issues unit-linked life insurance contracts with maturity $T$ for each of the $l_a$ individuals, which are linked to the risky asset price process $S$.

We will consider two basic forms of insurance contracts: the pure endowment and the term insurance. With pure endowment contract, the sum insured is to be paid at time $T$ if the insured is then still alive, while the term insurance states that the sum insured is due immediately upon death before time $T$, see e.g. \cite{moll98} for further details.

The obligation of the insurance company generated by the entire portfolio of pure endowment contracts is given by
\begin{equation}\label{G}
 G_T := \xi \sum_{i=1}^{l_a} \I_{\{T_i>T\}} = \xi (l_a - N_T),
\end{equation}
where $\xi \in L^2(\F_T,\P_1)$ 
represents the payoff of a contingent claim with maturity $T$ on the financial market.
Note that $ G_T\in L^2(\widetilde \H_T, \Q)$.

For term insurance contracts the payment is a time dependent function of the form $g(u,S_u)$, where $g(u,x)$ is a measurable function of its arguments. A simple way of transforming the obligations into a contingent claim with maturity $T$ is to assume that all payment are deferred to the term of the contract $T$. Thus the contingent claim for a portfolio of $l_a$ term insurance contracts is
\begin{equation}\label{G1}
 G_T :=   \sum_{i=1}^{l_a} g(T_i, S_{T_i} )\I_{\{T_i \leq T\}} = \int_0^T g(u,S_u) \ud N_u
\end{equation}
where $g(t,S_t) \in L^2(\F_t,\P_1)$, for every $t \in [0,T]$.
In the sequel we assume that
 \begin{equation}\label{eq:quadrato_integr_sup_g}
 \espu{\sup_{t\in [0,T]}g^2(t, S_t)}<\infty.
  \end{equation}
 This is a classical assumption in the insurance framework (see e.g. \cite{moll01,vv}) and ensures that the random variable $G_T$ is square-integrable, as proved in the following Lemma.
\begin{lemma}
Let the process $\{g(t,S_t),\ t \in [0,T]\}$ satisfy \eqref{eq:quadrato_integr_sup_g}. Then $ G_T\in L^2(\widetilde \H_T, \Q)$.
\end{lemma}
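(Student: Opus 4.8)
The plan is to reduce the square-integrability of $G_T$ to the integrability hypothesis \eqref{eq:quadrato_integr_sup_g} by a pathwise domination argument, exploiting the fact that the counting process $N$ is deterministically bounded by the finite population size $l_a$. This is what makes the statement cheap: one never needs to control the intensity of $N$ or integrate against its compensator.

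First I would introduce the random variable $g^* := \sup_{t \in [0,T]}|g(t,S_t)|$, which is $\F_T$-measurable and satisfies $\espu{(g^*)^2} < \infty$ by \eqref{eq:quadrato_integr_sup_g}. Starting from the representation \eqref{G1}, namely $G_T = \sum_{i=1}^{l_a} g(T_i, S_{T_i})\I_{\{T_i \leq T\}}$, the triangle inequality gives the pathwise bound
\begin{equation*}
|G_T| \leq \sum_{i=1}^{l_a} |g(T_i, S_{T_i})|\,\I_{\{T_i \leq T\}} \leq g^* \sum_{i=1}^{l_a}\I_{\{T_i \leq T\}} = g^* N_T.
\end{equation*}
Since at most $l_a$ deaths can occur, one has the deterministic bound $N_T \leq l_a$, whence $G_T^2 \leq l_a^2 (g^*)^2$.

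Next I would take the $\Q$-expectation. Because $g^*$ depends only on the financial coordinate $\omega_1$ and $\Q = \P_1\times\P_2$, the product structure yields $\espq{(g^*)^2} = \espu{(g^*)^2}$, so that
\begin{equation*}
\espq{G_T^2} \leq l_a^2\, \espq{(g^*)^2} = l_a^2\, \espu{\sup_{t\in[0,T]}g^2(t,S_t)} < \infty,
\end{equation*}
which establishes square-integrability. Finally I would record the measurability of $G_T$: written as the stochastic integral $\int_0^T g(u,S_u)\ud N_u$, its integrand is $\bF$-adapted, hence $\widetilde\bH$-adapted, while $N$ is $\bG^N$-adapted, hence $\widetilde\bH$-adapted, so $G_T$ is $\widetilde\H_T$-measurable. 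Combining the two facts gives $G_T \in L^2(\widetilde\H_T, \Q)$.

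I do not expect a genuine obstacle here; the whole content is the pathwise domination together with the deterministic bound $N_T \le l_a$, which replaces what would otherwise require handling the compensator of $N$. The only points deserving a line of care are the passage between the summation and the integral representations of $G_T$ in \eqref{G1}, and the use of the product structure of $\Q$ to turn the $\Q$-expectation of the purely financial quantity $(g^*)^2$ into a $\P_1$-expectation, which is where Assumption \ref{ass:indipendenza_assicurazioni} and Remark \ref{rem:variabili_aleatorie} enter.
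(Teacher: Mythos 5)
Your proof is correct, and it takes a genuinely more elementary route than the paper's. You bound $|G_T|$ pathwise by $g^* N_T \le l_a\,g^*$ with $g^*:=\sup_{t\in[0,T]}|g(t,S_t)|$, so that $G_T^2\le l_a^2 (g^*)^2$ and the product structure of $\Q$ finishes the job; nothing about the dynamics of $N$ is used beyond the deterministic bound $N_T\le l_a$. The paper instead applies the Cauchy--Schwarz inequality $\bigl(\sum_{i=1}^{l_a}a_i\bigr)^2\le l_a\sum_{i=1}^{l_a}a_i^2$ to reduce to $\espq{\int_0^T g^2(u,S_u)\,\ud N_u}$, then invokes the absence of common jumps between $S$ and $N$ (a consequence of Assumption \ref{ass:indipendenza_assicurazioni}) to replace $g^2(t,S_t)$ by $g^2(t,S_{t^-})$, and finally passes to the compensator, bounding $\espq{\int_0^T g^2(t,S_{t^-})\Lambda_t\,\ud t}$ via $\espd{\int_0^T\Lambda_t\,\ud t}\le l_a$. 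Both arguments yield essentially the same constant $l_a^2\,\espu{\sup_{t\in[0,T]}g^2(t,S_t)}$. What the paper's longer route buys is that it rehearses exactly the predictable-integrand/compensator machinery that is indispensable later (e.g.\ in Lemma \ref{lemma:mgK1}, where the integrands are no longer dominated by a single square-integrable random variable and a pathwise argument is unavailable); what your route buys is brevity and the elimination of the no-common-jumps step. One small point worth a sentence in a polished write-up: the measurability of $g^*$ is implicitly granted by the very statement of hypothesis \eqref{eq:quadrato_integr_sup_g}, so your appeal to it is legitimate.
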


\begin{proof}
\begin{align*}
\espq{G_T^2}&=\espq{\left(\int_0^Tg(t,S_t)\ud N_t\right)^2}=\espq{\left(\sum_{i=1}^{l_a} g(T_i, S_{T_i} )\I_{\{T_i \leq T\}}\right)^2}\\
&\leq l_a \espq{\sum_{i=1}^{l_a} g^2(T_i, S_{T_i} )\I_{\{T_i \leq T\}}}=l_a \espq{\int_0^T g^2(u,S_u) \ud N_u} \\
&= l_a \espq{\int_0^T \left\{g^2(t,S_t)-g^2(t,S_{t^-})\right\} \ud N_t} + l_a \espq{\int_0^Tg^2(t,S_{t^-}) \ud N_t}\\
&=l_a \espq{\int_0^Tg^2(t,S_{t^-}) \ud N_t},
\end{align*}
where the last equality follows by Assumption \ref{ass:indipendenza_assicurazioni},
so that $N$ and $S$ do not have common jump times. Therefore, by \eqref{eq:quadrato_integr_sup_g} we get that
\begin{align*}
l_a \espq{\int_0^Tg^2(t,S_{t^-}) \ud N_t}&=l_a \espq{\int_0^Tg^2(t,S_{t^-}) \Lambda_t \ud t}\leq l_a \espu{\sup_{t \in [0,T]}g^2(t, S_t)}\espd{\int_0^T\Lambda_t \ud t}<\infty.
\end{align*}
\end{proof}
We will refer to $G_T$ as the {\em insurance contingent claim}.

The goal is to construct locally risk-minimizing hedging strategies for the given insurance contingent claims, in the case where the insurance company has  at its disposal the  information flow represented by $\widetilde \bH$.

\section{Locally risk-minimizing strategies under partial information}\label{sec:LRM}

The first step is to introduce the class of all admissible hedging strategies, which have to be adapted to the  information flow $\widetilde \bH$.

\begin{definition}\label{thetaF}
The space $\Theta(\widetilde \bH)$ 
consists of all $\R$-valued $\widetilde \bH$-predictable 
processes $\theta=\{\theta_t,\ t \in [0,T]\}$ satisfying
the following integrability condition:
\begin{equation}\label{admissible}
\espq{\int_0^T\theta_u^2 \ud \langle M\rangle_u+\left(\int_0^T |\theta_u\alpha_u| \ud \langle M \rangle_u\right)^2}<\infty,
\end{equation}
where $\mathbb E^{\Q}[\cdot]$ indicates the expectation with respect to $\Q$.
\end{definition}
Notice that condition \eqref{admissible} implies that the process $\ds \left\{\int_0^t\theta_u \ud S_u, \ t \in [0,T]\right\}$ is well-defined and turns out to be square-integrable.
\begin{definition}
An $\widetilde \bH$-{\em admissible strategy} is a pair $\psi=(\theta,\eta)$, where $\theta \in \Theta(\widetilde \bH)$  and $\eta=\{\eta_t, \ t \in [0,T]\}$ is an
$\R$-valued $\widetilde \bH$-adapted process
such that the value process
$V(\psi)=\{V_t(\psi), \ t \in [0,T]\}:=\theta S + \eta$ is right-continuous and  square-integrable, i.e. $V_t(\psi) \in L^2(\widetilde \H_t,\Q)$, for each $t
\in [0,T]$.
\end{definition}
Here, the processes $\theta$ and $\eta$ represent respectively the units of risky asset and riskless asset held in the portfolio.

For any $\widetilde \bH$-admissible strategy $\psi$, we can define the associated {\em cost process} $C(\psi)=\{C_t(\psi), \ t \in [0,T] \}$, which is the
$\R$-valued $\widetilde \bH$-adapted process given by
$$
C_t(\psi):=V_t(\psi)-\int_0^t \theta_u \ud S_u,
$$
for every $t \in [0,T]$, and the $\widetilde \bH$-{\em risk process} $R^{\widetilde \H}(\psi)=\{R^{\widetilde \H}_t(\psi), \ t \in [0,T]\}$, by setting
\begin{equation*}\label{eq:risk}
R_t^{\widetilde \H}(\psi):=\condesphh{\left(C_T(\psi)-C_t(\psi)\right)^2}
\end{equation*}
for every $t \in [0,T]$. We observe that an $\widetilde \bH$-admissible strategy $\psi$ is self-financing if and only if the associated cost process $C(\psi)$ is
constant and the $\widetilde \bH$-risk process $R^{\widetilde \H}(\psi)$ is zero.

Although $\widetilde \bH$-admissible strategies with $V_T(\psi)=G_T$ will in general not be self-financing, it turns out that good $\widetilde \bH$-admissible strategies are still
self-financing on average in the following sense.
\begin{definition}
An $\widetilde \bH$-admissible strategy $\psi$ is called {\em mean-self-financing} if the associated cost process $C(\psi)$ is an $(\widetilde \bH,\Q)$-martingale.
\end{definition}

The definitions below translate the concept of locally risk-minimizing strategy in the partial information setting.

\begin{definition}
A small perturbation is an $\widetilde \bH$-admissible strategy $\Delta=(\delta^1, \delta^2)$ such that $\delta^1$ is bounded, the variation of $\int\delta^1_u \alpha_u \ud \langle M \rangle_u$ is bounded (uniformly in $t$ and $\omega$) and $\delta^1_T=\delta^2_T=0$. For any subinterval of $[0,T]$, we define the small perturbation
\[
\Delta|_{(s,t]}:=(\delta^1\I_{(s,t]}, \delta^2\I_{[s,t)}).
\]
\end{definition}
We consider a partition $\tau = \{t_i\}_{i=0,1,2,\ldots,k}$ of the time interval $[0,T]$ such that
\[
0=t_0<t_1<\ldots<t_k=T.
\]
\begin{definition}
For an $\widetilde \bH$-admissible strategy $\psi$, a small perturbation $\Delta$ and a partition $\tau$ of $[0,T]$, we set
\[
r^\tau_{\widetilde \H}(\psi, \Delta):=\sum_{t_i,t_{i+1}\in \tau}\frac{R^{\widetilde \H}_{t_i}(\psi+\Delta|_{(t_i,t_{i+1}]})+R^{\widetilde \H}_{t_i}(\psi)}{\bE^\Q\left[\langle M \rangle_{t_{i+1}}-\langle M\rangle_{t_i}|\widetilde{\H}_{t_i}\right]}\I_{(t_i, t_{i+1}]}.
\]
The strategy $\psi$ is called $\widetilde \bH$-locally risk-minimizing if
\[
\liminf_{n\to \infty}r^{\tau_n}_{\widetilde \H}(\psi, \Delta)\geq 0, \quad (\Q\otimes \langle M\rangle)-a.s. \mbox{ on } \Omega\times [0,T],
\]
for every small perturbation $\Delta$ and every increasing sequence $\{\tau_n\}_{n\in \bN}$ of partitions of $[0,T]$ tending to identity.
\end{definition}
Following the idea of \cite{s01}, we now introduce the concept of pseudo-optimal strategy in this framework.
\begin{definition}\label{def:optimalstrategyG}
Let $G_T\in L^2(\widetilde \H_T, \Q)$ be an insurance contingent claim. An $\widetilde \bH$-admissible strategy $\psi$ such that $V_T(\psi)=G_T$ $\Q-\mbox{a.s.}$ is called $\widetilde \bH$-{\em
pseudo-optimal} for $G_T$ if and only if $\psi$ is mean-self-financing and the $(\widetilde \bH,\Q)$-martingale $C(\psi)$ is  strongly orthogonal to the
$\Q$-martingale part $M$ of $S$.
\end{definition}

Note that, $\widetilde \bH$-pseudo-optimal strategies are both easier to find and to characterize, as Proposition \ref{prop:strategy-FS} will show in the following.
In the one-dimensional case and under very general assumptions, locally risk-minimizing strategies and pseudo-optimal strategies coincide, see~\cite[Theorem 3.3]{s01}. More precisely, the equivalence holds if we assume that the predictable quadratic variation $\langle M \rangle$ of the martingale $M$, is $\Q$-a.s. strictly increasing and the finite variation part of $S$ is $\Q$-a.s. continuous. \\

The key result for finding $\widetilde \bH$-pseudo-optimal strategies is represented by the F\"{o}llmer-Schweizer decomposition.
Since the insurance contingent claim $G_T$ belongs to $L^2(\widetilde \H_T, \Q)$, it admits the F\"{o}llmer-Schweizer decomposition with respect to $S$ and $\widetilde \bH$, i.e.
\begin{equation}\label{eq:fsdecompositionG}
G_T = G_0 + \int_0^T \theta^{\widetilde\H}_t \ud S_t + K_T \quad \Q-\mbox{a.s.},
\end{equation}
where $G_0 \in L^2(\widetilde \H_0, \Q)$, $\theta^{\widetilde \H}\in \Theta(\widetilde \bH)$ and $K=\{K_t,\ t \in [0,T]\}$ is a square-integrable $(\widetilde \bH,\Q)$-martingale with $K_0=0$
strongly orthogonal to the $\Q$-martingale part of $S$.

\begin{proposition}\label{prop:strategy-FS}
An insurance contingent claim $G_T \in L^2(\widetilde \H_T, \Q)$ admits a unique  $\widetilde \bH$-pseudo-optimal strategy $\psi^*=(\theta^*,\eta^*)$ with $V_T(\psi^*)=G_T$
$\Q-\mbox{a.s.}$. In terms of the decomposition \eqref{eq:fsdecompositionG}, the strategy $\psi^*$ is explicitly given by
\[
\theta_t^*=\theta^{\widetilde\H}_t, \quad \forall \ t \in [0,T],
\]
with minimal cost
\[
C_t(\psi^*)= G_0+ K_t, \quad \forall \ t \in [0,T];
\]
its value process is
\begin{equation*}\label{eq:optimalport}
V_t(\psi^*)=\condesphh{G_T-\int_t^T \theta^{\widetilde \H}_u \ud S_u}= G_0 + \int_0^t \theta^{\widetilde \H}_u \ud S_u + K_t, \quad \forall \ t \in [0,T],
\end{equation*}
so that $\eta_t^*=V_t(\psi^*)-\theta_t^* S_t$, for every $t \in [0,T]$.
\end{proposition}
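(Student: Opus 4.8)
The plan is to transcribe, into the partial-information filtration $\widetilde\bH$, the classical characterization of pseudo-optimal strategies in terms of the F\"{o}llmer-Schweizer decomposition (cf. \cite{s01}). The point is that the three defining conditions of pseudo-optimality --- the terminal constraint $V_T(\psi)=G_T$, the mean-self-financing property, and the strong orthogonality of the cost process to $M$ --- are engineered precisely so that a pseudo-optimal strategy and a decomposition of the form \eqref{eq:fsdecompositionG} are two encodings of the same data. Accordingly I would establish the statement by a two-way passage between strategies and decompositions, relying on the existence of the F\"{o}llmer-Schweizer decomposition (already recorded before the proposition) for the construction, and on its uniqueness for the uniqueness assertion.

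For \emph{existence}, I would start from \eqref{eq:fsdecompositionG}, set $\theta^*:=\theta^{\widetilde\H}$, and define the value process by $V_t(\psi^*):=G_0+\int_0^t\theta^{\widetilde\H}_u\ud S_u+K_t$ together with $\eta_t^*:=V_t(\psi^*)-\theta_t^* S_t$. First I would check $\widetilde\bH$-admissibility: $\theta^*\in\Theta(\widetilde\bH)$ holds by hypothesis, so $\int_0^\cdot\theta^{\widetilde\H}_u\ud S_u$ is well defined and square-integrable by the remark following Definition \ref{thetaF}; combined with $G_0\in L^2(\widetilde\H_0,\Q)$ and the square-integrable martingale $K$, this yields $V_t(\psi^*)\in L^2(\widetilde\H_t,\Q)$, right-continuity is inherited from $S$, the integral and $K$, and $\eta^*$ is $\widetilde\bH$-adapted by construction. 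Then a direct computation gives the cost process
\[
C_t(\psi^*)=V_t(\psi^*)-\int_0^t\theta^*_u\ud S_u=G_0+K_t,
\]
which, since $K$ is an $(\widetilde\bH,\Q)$-martingale with $K_0=0$ strongly orthogonal to $M$ and $G_0$ is $\widetilde\H_0$-measurable, is an $(\widetilde\bH,\Q)$-martingale strongly orthogonal to $M$; evaluating at $t=T$ and using \eqref{eq:fsdecompositionG} gives $V_T(\psi^*)=G_T$. Hence $\psi^*$ is pseudo-optimal with minimal cost $C_t(\psi^*)=G_0+K_t$. To obtain the conditional-expectation form of the value process I would use the martingale property $C_t(\psi^*)=\condesphh{C_T(\psi^*)}$ with $C_T(\psi^*)=G_T-\int_0^T\theta^{\widetilde\H}_u\ud S_u$, and then absorb the $\widetilde\H_t$-measurable term $\int_0^t\theta^{\widetilde\H}_u\ud S_u$ inside the conditional expectation to get $V_t(\psi^*)=\condesphh{G_T-\int_t^T\theta^{\widetilde\H}_u\ud S_u}$.

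For \emph{uniqueness}, I would take an arbitrary pseudo-optimal $\psi=(\theta,\eta)$ with $V_T(\psi)=G_T$, put $\widetilde G_0:=C_0(\psi)=V_0(\psi)$ and $\widetilde K_t:=C_t(\psi)-C_0(\psi)$, and note that $\widetilde K$ is then a square-integrable $(\widetilde\bH,\Q)$-martingale with $\widetilde K_0=0$ strongly orthogonal to $M$. Writing $V_t(\psi)=C_t(\psi)+\int_0^t\theta_u\ud S_u$ and evaluating at $t=T$ produces $G_T=\widetilde G_0+\int_0^T\theta_u\ud S_u+\widetilde K_T$, i.e. a second F\"{o}llmer-Schweizer decomposition of $G_T$ relative to $S$ and $\widetilde\bH$; its uniqueness forces $\theta=\theta^{\widetilde\H}=\theta^*$, $\widetilde G_0=G_0$ and $\widetilde K=K$, whence $C(\psi)=C(\psi^*)$, $V(\psi)=V(\psi^*)$ and finally $\eta=\eta^*$. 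The main obstacle is exactly the uniqueness of the F\"{o}llmer-Schweizer decomposition in the partial-information setting: the hard part will be to confirm that the integrability and orthogonality requirements underlying the classical uniqueness argument transfer verbatim to $\widetilde\bH$ (using that $S$ is an $(\widetilde\bH,\Q)$-semimartingale with the same decomposition \eqref{eq:SC} and that $\Theta(\widetilde\bH)$ is the appropriate integrand class), after which the result follows by invoking uniqueness as the known counterpart of the full-information statement in \cite{s01}.
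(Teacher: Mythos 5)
Your proposal is correct and follows essentially the same route as the paper, which simply invokes the argument of \cite[Proposition 3.4]{s01}: that argument is precisely the two-way correspondence you spell out between pseudo-optimal strategies and F\"{o}llmer--Schweizer decompositions, with existence read off from the decomposition and uniqueness inherited from the uniqueness of the decomposition. The only point the paper leaves implicit, as you rightly flag, is that the decomposition with respect to $S$ and $\widetilde\bH$ (rather than $\bF$) carries the same uniqueness property, which holds because $S$ satisfies the same structure condition as an $(\widetilde\bH,\Q)$-semimartingale.
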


\begin{proof}
The proof uses the same arguments of~\cite[Proposition 3.4]{s01}.
\end{proof}

The following result provides an operative way to compute the optimal strategy by switching to the minimal martingale measure $\Q^*$.

\begin{proposition}\label{prop:strategia_ottimaG}
Let $G_T \in L^2(\widetilde \H_T, \Q)$ be an insurance contingent claim and $\psi^*=(\theta^*,\eta^*)$  be the associated $\widetilde \bH$-pseudo-optimal strategy.
Then, the optimal value process $V(\psi^*)=\{V_t(\psi^*),\ t \in [0,T]\}$ is given by
\begin{equation} \label{cla1}
V_t(\psi^*)= \espqq{G_T \Big{|} \widetilde \H_t}, \quad \forall \ t \in [0,T],
\end{equation}
where $\bE^{\Q^*}\left[\cdot \Big{|} \widetilde \H_t\right]$ denotes the conditional expectation with respect to $\widetilde \H_t$ computed under $ \Q^*$; moreover, $\theta^*$ is explicitly given by
\begin{equation} \label{cla2}
\theta_t^*=\frac{ \ud   \langle V(\psi^*), S\rangle_t}{ \ud   \langle S \rangle_t}, \quad \forall \ t \in [0,T] ,
\end{equation}
where the sharp brackets are computed between the martingale parts of the processes $V(\psi^*)$ and $S$, under $\Q$ and with respect to the filtration $\widetilde \bH$.
\end{proposition}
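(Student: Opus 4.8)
The plan is to build directly on the Föllmer-Schweizer decomposition \eqref{eq:fsdecompositionG} together with the explicit description of the pseudo-optimal strategy obtained in Proposition \ref{prop:strategy-FS}. By that proposition the optimal value process can be written as
\[
V_t(\psi^*) = G_0 + \int_0^t \theta^{\widetilde \H}_u \ud S_u + K_t, \quad t \in [0,T],
\]
where $\theta^{\widetilde \H}\in\Theta(\widetilde \bH)$ and $K$ is a square-integrable $(\widetilde \bH, \Q)$-martingale, strongly orthogonal to $M$, with $K_0=0$. The two claims \eqref{cla1} and \eqref{cla2} will then follow by exploiting the properties of the minimal martingale measure $\Q^*$ established in the lemma preceding this statement.

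To prove \eqref{cla1}, I would compute $\espqq{G_T \,|\, \widetilde \H_t}$ by inserting \eqref{eq:fsdecompositionG} and handling its three terms separately. The constant $G_0 \in L^2(\widetilde \H_0, \Q)$ is $\widetilde \H_t$-measurable and thus passes through the conditional expectation. Since $S$ is an $(\widetilde \bH, \Q^*)$-martingale by construction of $\Q^*$, the stochastic integral $\int_0^\cdot \theta^{\widetilde \H}_u \ud S_u$ is an $(\widetilde \bH, \Q^*)$-martingale, so that $\espqq{\int_0^T \theta^{\widetilde \H}_u \ud S_u \,|\, \widetilde \H_t} = \int_0^t \theta^{\widetilde \H}_u \ud S_u$. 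Finally, because $K$ is a square-integrable $(\widetilde \bH, \Q)$-martingale strongly orthogonal to $M$, the defining property of the minimal martingale measure (Definition \ref{def:MMM}) guarantees that $K$ remains an $(\widetilde \bH, \Q^*)$-martingale, whence $\espqq{K_T \,|\, \widetilde \H_t}=K_t$. Adding the three contributions recovers exactly the expression for $V_t(\psi^*)$ displayed above, which gives \eqref{cla1}.

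For \eqref{cla2}, I would identify the $(\widetilde \bH, \Q)$-martingale part of $V(\psi^*)$. Writing $S = S_0 + M + \int_0^\cdot \alpha_u \ud \langle M \rangle_u$, the finite-variation part of $\int_0^\cdot \theta^{\widetilde \H}_u \ud S_u$ is $\int_0^\cdot \theta^{\widetilde \H}_u \alpha_u \ud \langle M \rangle_u$, so the martingale part of $V(\psi^*)$ equals $\int_0^\cdot \theta^{\widetilde \H}_u \ud M_u + K$. Computing the predictable covariation with the martingale part $M$ of $S$ and using the strong orthogonality of $K$ to $M$ (so that $\langle K, M \rangle = 0$) yields $\langle V(\psi^*), S \rangle_t = \int_0^t \theta^{\widetilde \H}_u \ud \langle M \rangle_u$. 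Since $\langle S \rangle_t = \langle M \rangle_t$, dividing by $\ud \langle S \rangle_t$ gives $\theta^*_t = \theta^{\widetilde \H}_t$, which is \eqref{cla2}.

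The main technical point I expect to require care is justifying that $\int_0^\cdot \theta^{\widetilde \H}_u \ud S_u$ and $K$ are genuine $\Q^*$-martingales, and not merely local ones, so that the conditional-expectation computations above are legitimate. This is where the standing integrability hypotheses enter: the square-integrability of the density process $L$ in \eqref{eq:square_integrability_L}, combined with the admissibility condition \eqref{admissible} defining $\Theta(\widetilde \bH)$ and the square-integrability of $K$ from the decomposition, should provide enough uniform integrability to upgrade the local-martingale property to the true-martingale property under $\Q^*$, thereby closing the argument.
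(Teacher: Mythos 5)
Your proof is correct and follows the standard argument that the paper itself simply delegates to \cite[Proposition 4.2]{ccc2}: condition the F\"{o}llmer--Schweizer decomposition \eqref{eq:fsdecompositionG} under $\Q^*$, use that both $\int_0^\cdot \theta^{\widetilde\H}_u \,\ud S_u$ and the orthogonal term $K$ remain $(\widetilde\bH,\Q^*)$-martingales, and then read off $\theta^*$ from the predictable covariation of the $\Q$-martingale parts, using $\langle K, M\rangle=0$ and $\langle S\rangle=\langle M\rangle$. The one delicate step --- upgrading the gains process and $K$ from local to true $\Q^*$-martingales via the square-integrability of the density \eqref{eq:square_integrability_L} and the admissibility condition \eqref{admissible} --- is precisely the point you flag, and it is resolved exactly as you indicate, so there is no gap.
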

\begin{proof}
The proof follows by that of Proposition 4.2 in \cite{ccc2}.
\end{proof}

\section{Application 1: pure endowment contracts}\label{sec:pure_endowment}
In the sequel we apply the results of Section \ref{sec:LRM} to compute the $\widetilde \bH$-pseudo-optimal strategy $\psi^*$ for a pure endowment contract, whose payoff $G_T$ is given in \eqref{G}. More precisely,  we write the F\"{o}llmer-Schweizer decomposition of the random variable $G_T$ using \eqref{cla1} and apply Proposition \ref{prop:strategy-FS} to identify the optimal strategy.

We observe that the payoff of a contingent claim $\xi \in L^2(\F_T, \P_1)$ admits the F\"{o}llmer-Schweizer decomposition with respect to $S$ and $\bF$, i.e.
\begin{equation}\label{eq:fsdecompositionXi}
\xi = U_0 + \int_0^T \beta_t \ud S_t + A_T \quad \P_1-\mbox{a.s.},
\end{equation}
where $U_0=\espu{\xi|\F_0}\in L^2(\F_0, \P_1)$, $\beta \in \Theta(\bF)$\footnote{The space $\Theta(\bF)$
consists of all $\R$-valued $\bF$-predictable
processes $\delta=\{\delta_t,\ t \in [0,T]\}$ satisfying the integrability condition $\ds \espu{\int_0^T\delta_u^2 \ud \langle M\rangle_u+\left(\int_0^T |\delta_u\alpha_u| \ud\langle M\rangle_u\right)^2}<\infty$.} and $A=\{A_t,\ t \in [0,T]\}$ is a square-integrable $(\bF,\P_1)$-martingale with $A_0=0$  strongly orthogonal to the $\P_1$-martingale part of $S$.

Thanks to Assumption \ref{ass:indipendenza_assicurazioni}, if we take the conditional expectation with respect to $\widetilde \H_t$ under the MMM $\Q^*$ in \eqref{G}, for every $t \in [0,T]$ we get
\begin{equation}\label{eq:valueprocess}
V_t(\psi^*)=\espqq{ G_T \Big{|} \widetilde \H_t} =  \espqq{ \xi \Big{|} \widetilde \H_t} \ \espqq{ l_a - N_T \Big{|} \widetilde \H_t} = \espp{\xi|\F_t} \espd{l_a - N_T|\G^N_t}.
\end{equation}
For the sake of simplicity we define the processes $B=\{B_t,\ t \in [0,T]\}$ and $U=\{U_t,\ t \in [0,T]\}$, given by
\begin{gather}
B_t=:\condespgnn{ l_a - N_T}\label{eq:defn_B},\\
U_t:=\espp{\xi|\F_t},\nonumber
\end{gather}
 for every $t \in [0,T]$.

Note that \eqref{eq:fsdecompositionXi} implies that
\begin{equation}\label{eq:U}
U_t=\espp{\xi|\F_t}=\espp{U_0 + \int_0^T \beta_u \ud S_u + A_T\Big{|}\F_t}=U_0 + \int_0^t \beta_u \ud S_u + A_t,\quad \forall \ t \in [0,T],
\end{equation}
where the last equality holds since $\ds \left\{\int \beta_u \ud S_u, \ t \in [0,T]\right\}$ is an $(\bF, \P^*)$-martingale and $A$ turns out to be an $(\bF, \P^*)$-martingale by definition of MMM.

\begin{proposition}[The $\widetilde \bH$-pseudo-optimal strategy]\label{prop:pseudo_opt_strategy}
The F\"{o}llmer-Schweizer decomposition of the insurance contingent claim $G_T=\xi(l_a-N_T)$ is given by
\[
G_T=G_0+\int_0^TB_{r^-}\beta_r \ud S_r + K_T\quad \Q-\mbox{a.s.},
\]
where $\ds G_0= \espu{\xi | \F_0} \espd{l_a-N_T}$,
\begin{align}\label{eq:mgK}
K_t=\int_0^tB_{s^-} \ud A_s +\int_0^t U_{s^-}\Gamma_{s}\left(\ud N_s-\pi_{s^-}(\Lambda) \ud s\right), \quad \forall \ t \in [0,T],
\end{align}
$B$ is given by \eqref{eq:defn_B}, $\beta$ is the integrand with respect to $S$ in the F\"{o}llmer-Schweizer decomposition of $\xi$, see \eqref{eq:fsdecompositionXi}, and $\Gamma$ is a suitable $(\bG^N, \P_2)$-predictable process, see \eqref{eq:B} below.

Then, the $\widetilde \bH$-pseudo-optimal strategy $\psi^*=(\theta^*, \eta^*)$ is given by
\begin{gather*}
\theta_t^*= B_{t^-} \beta_t, \quad \forall \ t \in [0,T],\\
\eta_t^*=V_t(\psi^*)- B_t \beta_t S_t, \quad  \forall \ t \in [0,T].
\end{gather*}
and the optimal value process $V(\psi^*)$ is given by
\begin{equation}\label{eq:optimalvalueprocess}
V_t(\psi^*)=G_0+\int_0^t B_{r^-}\beta_r \ud S_r + K_t, \quad \forall \ t \in [0,T].
\end{equation}
\end{proposition}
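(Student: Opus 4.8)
The plan is to produce explicitly the Föllmer-Schweizer decomposition \eqref{eq:fsdecompositionG} of $G_T$ with respect to $S$ and $\widetilde\bH$; once this is available, Proposition \ref{prop:strategy-FS} delivers the pseudo-optimal strategy, its minimal cost and its value process \eqref{eq:optimalvalueprocess} simply by reading off the $S$-integrand and the orthogonal martingale part. The natural starting point is the factorization of the value process already obtained in \eqref{eq:valueprocess}, namely $V_t(\psi^*)=U_tB_t$, where $U_t=\espp{\xi|\F_t}$ is driven by the financial space $(\Omega_1,\F,\P_1)$ and $B_t=\condespgnn{l_a-N_T}$ is driven by the insurance space $(\Omega_2,\G,\P_2)$.

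First I would record the martingale representations of the two factors separately. For $U$, equation \eqref{eq:U} already gives $U_t=U_0+\int_0^t\beta_u\ud S_u+A_t$, where $\int\beta\ud S$ and $A$ are $(\bF,\P^*)$-martingales and $A$ is strongly orthogonal to $M$. For $B$, which is a square-integrable $(\bG^N,\P_2)$-martingale, I would invoke the martingale representation theorem for the counting process $N$ relative to its own filtration $\bG^N$: since $N^*_t=N_t-\int_0^t\pi_{s^-}(\Lambda)\ud s$ is the compensated $(\bG^N,\P_2)$-martingale, there is a $(\bG^N,\P_2)$-predictable process $\Gamma$ with
\[
B_t=B_0+\int_0^t\Gamma_s\ud N^*_s=B_0+\int_0^t\Gamma_s\left(\ud N_s-\pi_{s^-}(\Lambda)\ud s\right),\quad\forall\ t\in[0,T],
\]
and $B_0=\espd{l_a-N_T}$; this is precisely the process $\Gamma$ in the statement, whose explicit filtering expression is the content of \eqref{eq:B}.

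The core computation is then an integration by parts for the product $V_t=U_tB_t$. Because $U$ is built from objects on $(\Omega_1,\F,\P_1)$ and $B$ from objects on $(\Omega_2,\G,\P_2)$, Assumption \ref{ass:indipendenza_assicurazioni} guarantees that these two martingales are orthogonal and share no common jumps, so their covariation vanishes, $[U,B]\equiv0$. Hence $V_t=U_0B_0+\int_0^tB_{s^-}\ud U_s+\int_0^tU_{s^-}\ud B_s$, and substituting $\ud U_s=\beta_s\ud S_s+\ud A_s$ together with $\ud B_s=\Gamma_s\ud N^*_s$ gives
\[
V_t=U_0B_0+\int_0^tB_{s^-}\beta_s\ud S_s+\int_0^tB_{s^-}\ud A_s+\int_0^tU_{s^-}\Gamma_s\ud N^*_s.
\]
Comparing with \eqref{eq:fsdecompositionG} identifies $G_0=U_0B_0=\espu{\xi|\F_0}\espd{l_a-N_T}$, the integrand $\theta^{\widetilde\H}_t=B_{t^-}\beta_t$, and the martingale $K$ of \eqref{eq:mgK}; evaluating at $t=T$ and using $U_TB_T=\xi(l_a-N_T)=G_T$ yields the claimed decomposition.

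It remains to check that this is genuinely the Föllmer-Schweizer decomposition, that is, $B_{\cdot^-}\beta\in\Theta(\widetilde\bH)$ and $K$ is a square-integrable $(\widetilde\bH,\Q)$-martingale with $K_0=0$ strongly orthogonal to $M$. The integrand condition follows from the boundedness $0\le B_t\le l_a$ (it is a conditional expectation of $l_a-N_T$) combined with $\beta\in\Theta(\bF)$ and \eqref{eq:int_cond}. For $K$, the term $\int B_{s^-}\ud A_s$ inherits strong orthogonality to $M$ from $A$, whereas $\int U_{s^-}\Gamma_s\ud N^*_s$ is orthogonal to $M$ because $N^*$ lives on the insurance space, independent of the financial one. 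I expect the square-integrability of this insurance-driven integral — equivalently, the $L^2$-bound on $\Gamma$ supplied by the point-process representation of $B$ — to be the main obstacle; once it is secured, Proposition \ref{prop:strategy-FS} immediately produces $\theta^*$, $\eta^*$ and the value process, completing the argument.
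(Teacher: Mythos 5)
Your proposal is correct and follows essentially the same route as the paper's proof: martingale representation of $B$ against the compensated counting process, It\^{o}'s product rule on $V_t=U_tB_t$ with the cross-variation terms vanishing by independence, the bound $0\le B_t\le l_a$ to verify $B_{\cdot^-}\beta\in\Theta(\widetilde\bH)$, and identification of $G_0$, $\theta^{\widetilde\H}$ and $K$ via Proposition \ref{prop:strategy-FS}. The square-integrability and orthogonality of $K$ that you flag as the remaining obstacle is exactly what the paper relegates to Lemma \ref{lemma:mgK} in the appendix, so nothing essential is missing.
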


\begin{proof}
Note that, by construction, $B$ is a $(\bG^N, \P_2)$-square-integrable martingale. Then by the Martingale Representation Theorem, see e.g.  \cite[Chapter 4, Theorem 4.37]{js}, there exists a $(\bG^N, \P_2)$-predictable process $\Gamma:=\{\Gamma_t, \ t\in [0,T]\}$ such that $\ds \espd{ \int_0^T\Gamma_u^2 \pi_u(\Lambda)\ud u}<\infty$ and
\begin{align}\label{eq:B}
B_t&=B_0+\int_0^t\Gamma_{u}(\ud N_u-\pi_{u^-}(\Lambda)\ud u)
\end{align}
for every $t \in [0,T]$, where $B_0=\espd{l_a-N_T}=l_a-\espd{\int_0^T\Lambda_s \ud s}$.\\
Therefore, taking \eqref{eq:valueprocess}, \eqref{eq:U} and \eqref{eq:B} into account, by It\^o's product rule we get
\begin{align}
\ud V_t(\psi^*)&= \ud \espqq{G_T \Big{|} \widetilde \H_t}= B_{t^-} \ud U_t + U_{t^-} \ud B_t + \ud \langle B,U\rangle_t + \ud \left(\sum_{s\leq t} \Delta B_s \ \Delta U_s \right)\nonumber\\
& =B_{t^-} \beta_{t} \ud S_t + B_{t^-} \ud A_t + U_{t^-}\Gamma_{t}\left(\ud N_t-\pi_{t^-}(\Lambda) \ud t\right),\label{eq:prodotto}
 \end{align}
since $\langle B,U\rangle_t=0$ for all $t \in [0,T]$ because of the dynamics of $B$, and $\sum_{s\leq t}\left( \Delta B_s \ \Delta U_s \right)=0$ for all $t \in [0,T]$ due to the independence between the financial  and the insurance markets  (see Assumption \ref{ass:indipendenza_assicurazioni}).\\
Clearly, the process $\{B_{t^-}\beta_t,\ t \in [0,T]\}\in \Theta(\widetilde{\bH})$, see Definition \ref{thetaF}. Indeed,
\begin{align*}
\espq{\int_0^TB^2_{t^-}\beta_t^2\ud \langle M \rangle_t + \left(\int_0^TB_{t^-}\beta_t\alpha_t\ud \langle M\rangle_t\right)^2}
\leq l_a^2 \  \espq{\int_0^T \beta_t^2\ud \langle M \rangle_t + \left(\int_0^T\beta_t\alpha_t\ud \langle M\rangle_t\right)^2}<\infty,
\end{align*}
since $0 \leq B_t \leq l_a$, for every $t \in [0,T]$, and $\beta \in \Theta(\widetilde{\bH})$.\\
Finally, since the process $K$ in \eqref{eq:mgK} is a square-integrable $(\widetilde{\bH}, \Q)$-martingale, see Lemma \ref{lemma:mgK} in Appendix \ref{app:aspettazione_condizionata}, we obtain the result.
\end{proof}

In the sequel we provide a characterization of the processes $\beta$ and $B$.

An analogous result to that of Proposition \ref{prop:strategia_ottimaG} can be used to characterize the process $\beta$, which gives the pseudo-optimal strategy (with respect to $\bF$) for the contingent claim $\xi$ in the underlying financial market, see Section \ref{sec:financial_market}. Precisely, the process $\beta$ appearing in the F\"{o}llmer-Schweizer decomposition \eqref{eq:fsdecompositionXi} of $\xi$ can be explicitly characterized thanks to the MMM $\P^*$, as stated below.

\begin{proposition}\label{prop:strategia_ottimaXi}
Let $\xi \in L^2(\F_T, \P_1)$ be a contingent claim and $\phi^*=(\delta^*,\zeta^*)$  be the associated pseudo-optimal strategy (with respect to $\bF$).
Then, the optimal value process $\bar V(\phi^*)=\{\bar V_t(\phi^*),\ t \in [0,T]\}$ is given by
\begin{equation} \label{eq:port_valueXi}
\bar V_t(\phi^*)= \bE^{\P^*}\left[\xi | \F_t\right], \quad \forall \ t \in [0,T],
\end{equation}
where $\bE^{\P^*}\left[\cdot | \F_t\right]$ denotes the conditional expectation with respect to $\F_t$ computed under $ \P^*$. Finally $\beta$ is equal to $\delta^*$, which is explicitly given by
\begin{equation} \label{eq:strategiaBeta}
\beta_t=\delta_t^*=\frac{ \ud   \langle \bar V(\phi^*), S\rangle_t}{ \ud   \langle S \rangle_t}, \quad \forall \ t \in [0,T] ,
\end{equation}
where the sharp brackets are computed under $\P_1$.
\end{proposition}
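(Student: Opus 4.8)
The plan is to read off everything from the Föllmer-Schweizer decomposition \eqref{eq:fsdecompositionXi} of $\xi$ together with the analogue of Proposition \ref{prop:strategy-FS} for the financial market $(\bF, \P_1)$. First I would note that this analogue identifies the pseudo-optimal integrand with the integrand in \eqref{eq:fsdecompositionXi}, so that $\delta^* = \beta$, and gives the optimal value process
\[
\bar V_t(\phi^*) = U_0 + \int_0^t \beta_u \, \ud S_u + A_t, \quad \forall \ t \in [0,T],
\]
with $\bar V_T(\phi^*) = \xi$ by \eqref{eq:fsdecompositionXi}. This already settles the equality $\beta = \delta^*$.

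Next I would identify $\bar V(\phi^*)$ with a $\P^*$-conditional expectation. Under the MMM $\P^*$ the price $S$ is an $(\bF, \P^*)$-martingale, so $\int_0^\cdot \beta_u \, \ud S_u$ is an $(\bF, \P^*)$-martingale (this is the point where one upgrades the local-martingale property to a genuine martingale property using $\beta \in \Theta(\bF)$ and the square-integrability of the density $L$ coming from \eqref{eq:square_integrability_L}). Moreover $A$ is a square-integrable $(\bF, \P_1)$-martingale strongly orthogonal to the $\P_1$-martingale part $M$ of $S$, so the defining property of the minimal martingale measure in Definition \ref{def:MMM} ensures that $A$ remains an $(\bF, \P^*)$-martingale. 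Hence $\bar V(\phi^*)$ is an $(\bF, \P^*)$-martingale and, taking conditional expectations,
\[
\bar V_t(\phi^*) = \espp{\bar V_T(\phi^*) \mid \F_t} = \espp{\xi \mid \F_t},
\]
which is \eqref{eq:port_valueXi}.

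Finally, to obtain \eqref{eq:strategiaBeta} I would compute covariations under $\P_1$. Writing $S = S_0 + M + \int_0^\cdot \alpha_u \, \ud \langle M\rangle_u$, the $\P_1$-martingale part of $\bar V(\phi^*)$ is $\int_0^\cdot \beta_u \, \ud M_u + A$, while the $\P_1$-martingale part of $S$ is $M$. Using bilinearity of the sharp bracket together with the strong orthogonality $\langle A, M\rangle = 0$ then yields $\langle \bar V(\phi^*), S\rangle_t = \int_0^t \beta_u \, \ud \langle M\rangle_u$ and $\langle S\rangle_t = \langle M\rangle_t$, so that $\beta_t = \ud \langle \bar V(\phi^*), S\rangle_t / \ud \langle S\rangle_t$, as claimed.

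The step I expect to be the most delicate is verifying that $\int \beta \, \ud S$ is a true $(\bF, \P^*)$-martingale rather than merely a local one; this rests on the integrability built into $\Theta(\bF)$ and on \eqref{eq:square_integrability_L}, and is the natural place to invoke an Ansel-Stricker type argument. All the remaining assertions are direct consequences of the defining property of the MMM and of strong orthogonality, exactly mirroring the proof of Proposition \ref{prop:strategia_ottimaG}.
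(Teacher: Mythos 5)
Your proposal is correct and follows essentially the same route as the paper, which simply invokes \cite[Proposition 3.4]{s01} and \cite[Proposition 4.2]{ccc2}; you have just unpacked that citation by identifying $\delta^*=\beta$ from the F\"{o}llmer--Schweizer decomposition \eqref{eq:fsdecompositionXi}, using the defining property of the MMM (and the true-martingale property of $\int\beta\,\ud S$ under $\P^*$, which the paper itself asserts in \eqref{eq:U}) to get \eqref{eq:port_valueXi}, and computing the brackets of the $\P_1$-martingale parts to get \eqref{eq:strategiaBeta}. No gaps.
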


\begin{proof}
The proof follows by \cite[Proposition 3.4]{s01} and \cite[Proposition 4.2]{ccc2}.
\end{proof}

\begin{remark}
Assume that the contingent claim $\xi$ has the form $\xi=\Upsilon(T, S_T)$. Then:
\begin{enumerate}
  \item in the Black \& Scholes financial market, the hedging strategy is given at any time $t$ by $\beta_t=\frac{\partial \varphi}{\partial s}(t, S_t)$, where $\varphi$ solves the well-known evaluation formula and $\varphi(T, S_T)=\xi$. In particular explicit expressions can be found for Put and Call options;
  \item in the case of incomplete L\'{e}vy driven market models, the strategy $\beta$ has been computed in \cite{vv} (see Theorem 2, equation (34));
  \item computations of the optimal strategy $\beta$ for general semimartingale driven market models can be found in \cite{cvv2010}.
  \end{enumerate}
\end{remark}
The next Lemma gives a representation of the process $B$.

\begin{lemma}
For every $t \in [0,T]$ we define ${}_{T-t}\widehat{p}_t:=\condespgnn{\exp\left\{-\int_{t}^{T}\lambda_a(s,X_s)\ud s\right\}}$.
Then the process $B$ in \eqref{eq:defn_B} is given by
\[B_t:=\condespgnn{ l_a - N_T}=(l_a-N_t){}_{T-t}\widehat{p}_{t}\quad \forall \ t \in [0,T].\]
\end{lemma}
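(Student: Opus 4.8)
The plan is to compute the conditional expectation $\condespgnn{l_a - N_T}$ by conditioning on the full information flow $\bG$ and exploiting the conditional-independence structure of the lifetimes. First I would write
\[
\condespgnn{l_a - N_T} = l_a - \condespgnn{N_T} = l_a - \sum_{i=1}^{l_a}\condespgnn{\I_{\{T_i\leq T\}}} = l_a - \sum_{i=1}^{l_a}\left(1 - \condespgnn{\I_{\{T_i>T\}}}\right),
\]
so that the task reduces to evaluating $\espd{\sum_{i=1}^{l_a}\I_{\{T_i>T\}}\,|\,\G^N_t}$. Since $N_t = l_a - \sum_{i=1}^{l_a}\I_{\{T_i>t\}}$ counts the deaths up to $t$ and $\G^N_t$-measurably records which individuals are still alive at time $t$, I would split the survival indicator as $\I_{\{T_i>T\}} = \I_{\{T_i>T\}}\I_{\{T_i>t\}}$ and use the tower property by first conditioning on $\G^X_\infty \vee \G^N_t$.

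The key computational step uses the survival process identity \eqref{eq:surv_proc}, namely ${}_{T-t}p_t\,\I_{\{T_i>t\}} = e^{-\int_t^T\lambda_a(u,X_u)\,\ud u}\I_{\{T_i>t\}}$. Conditioning the survival indicator on $\G^X_\infty$ together with the information that individual $i$ is alive at $t$, the conditional-independence Assumption \ref{ass:cond_indep_lifetimes} gives
\[
\espd{\I_{\{T_i>T\}}\,\Big|\,\G^X_\infty \vee \sigma(\I_{\{T_i>t\}})} = e^{-\int_t^T\lambda_a(u,X_u)\,\ud u}\,\I_{\{T_i>t\}}.
\]
Summing over $i$ and recalling that $\sum_{i=1}^{l_a}\I_{\{T_i>t\}} = l_a - N_t$ is $\G^N_t$-measurable, the exponential factor pulls out of the sum and a further conditioning on $\G^N_t$ produces
\[
\condespgnn{l_a - N_T} = (l_a - N_t)\,\condespgnn{e^{-\int_t^T\lambda_a(u,X_u)\,\ud u}} = (l_a - N_t)\,{}_{T-t}\widehat{p}_t,
\]
which is exactly the claimed formula with ${}_{T-t}\widehat{p}_t$ as defined in the statement.

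The main obstacle, and the step requiring the most care, will be justifying the interchange of conditioning operations: one must argue that conditioning $\I_{\{T_i>T\}}\I_{\{T_i>t\}}$ on $\G^X_\infty$ and on the survival status of individual $i$ at time $t$ yields the same factor regardless of $i$, so that the common exponential can be factored out of the sum before projecting onto $\G^N_t$. This relies on the exchangeability built into Assumption \ref{ass:cond_indep_lifetimes} (all $T_i$ share the same hazard rate $\lambda_a(t,X_t)$ and are conditionally independent given $\G^X_\infty$) and on the fact that $\G^N_t$ carries precisely the information $l_a - N_t$ about the number of survivors. I would make this rigorous by computing the $\G^N_t$-conditional expectation of each term $\I_{\{T_i>T\}}$ via the representation \eqref{eq:surv_proc}, taking the common factor $e^{-\int_t^T\lambda_a(u,X_u)\,\ud u}$ outside the finite sum since it does not depend on $i$, and finally identifying $\espd{e^{-\int_t^T\lambda_a(u,X_u)\,\ud u}\,|\,\G^N_t}$ with the filter-based quantity ${}_{T-t}\widehat{p}_t$.
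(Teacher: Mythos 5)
Your overall strategy --- tower property, conditional independence, the survival-process identity \eqref{eq:surv_proc}, and finally pulling the $\G^N_t$-measurable factor $l_a-N_t$ out of the conditional expectation --- is the same as the paper's, and your final formula is correct. However, there is a genuine gap in the conditioning chain. First, the premise that ``$\G^N_t$ measurably records which individuals are still alive at time $t$'' is false: $\G^N_t$ is generated by the counting process $N$ alone, so it records only \emph{how many} policy-holders have died, not \emph{which ones}; the filtration that records identities is $\bG^R$ from \eqref{eq:G^R}, and $\G^N_t$ is in general a strict sub-$\sigma$-algebra of $\G^R_t$. Second, the two conditionings you combine are not nested: you announce the tower property through $\G^X_\infty\vee\G^N_t$, but the identity you actually establish is for $\G^X_\infty\vee\sigma(\I_{\{T_i>t\}})$, and neither of these $\sigma$-algebras contains the other, so the tower property cannot be applied as written. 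Indeed the right-hand side $e^{-\int_t^T\lambda_a(u,X_u)\,\ud u}\,\I_{\{T_i>t\}}$ is not even $\G^X_\infty\vee\G^N_t$-measurable; one can check by exchangeability that the correct value of $\espd{\I_{\{T_i>T\}}\,|\,\G^X_\infty\vee\G^N_t}$ is $\frac{l_a-N_t}{l_a}\,e^{-\int_t^T\lambda_a(u,X_u)\,\ud u}$. Your term-by-term identity is therefore wrong, even though the sum over $i$ happens to produce the right answer.

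The repair is precisely the paper's route: condition first on $\G^R_t\vee\G^X_\infty$, which \emph{does} contain $\G^N_t$, so the tower property applies; use the conditional independence of the lifetimes to replace $\G^R_t$ by $\G^{R^i}_t$ in the $i$-th term; invoke identity \eqref{eq:asp_cond_T_i} of Appendix \ref{app:aspettazione_condizionata} (itself a point requiring care, since conditioning on the trace $\sigma$-algebra $\{T_i>t\}\cap\G^X_\infty$ is not trivially the same as conditioning on $\G^{R^i}_t\vee\G^X_\infty$) together with \eqref{eq:surv_proc} to obtain $\I_{\{T_i>t\}}\,e^{-\int_t^T\lambda_a(u,X_u)\,\ud u}$ for each $i$; sum over $i$ to get $(l_a-N_t)\,e^{-\int_t^T\lambda_a(u,X_u)\,\ud u}$; and only then project onto $\G^N_t$, where $l_a-N_t$ is measurable and factors out, yielding $(l_a-N_t)\,{}_{T-t}\widehat{p}_t$.
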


\begin{proof}
We recall that for every $t \in [0,T]$, $B_t=:\condespgnn{ l_a - N_T}$. Then, taking \eqref{eq:G^R} into account, we get
\begin{align}
\condespgnn{ l_a - N_T}&= \condespgnn{\sum_{i=1}^{l_a} \I_{\{T_i>T\}}}=\condespgnn{\sum_{i=1}^{l_a}\espd{\I_{\{T_i>T\}}|\G^{R}_t\vee\G^X_{\infty}}}\nonumber\\
&=\condespgnn{\sum_{i=1}^{l_a}\I_{\{T_i>t\}}\espd{\I_{\{T_i>T\}}\Big{|}\G^{R^i}_t\vee\G^X_{\infty}}}\nonumber\\
&=\condespgnn{\sum_{i=1}^{l_a}\I_{\{T_i>t\}}\espd{\I_{\{T_i>T\}}|\{T_i>t\}\cap\G^X_{\infty}}}\label{eq:pass3}.
\end{align}
The last equality follows by \eqref{eq:asp_cond_T_i} in Appendix \ref{app:aspettazione_condizionata}. By the definition of the survival process \eqref{eq:defn_surv_proc} we get that \eqref{eq:pass3} becomes
\begin{equation}\label{eq:passaggi}
\condespgnn{\sum_{i=1}^{l_a}\I_{\{T_i>t\}}{}_{T-t}p_t}=\condespgnn{\sum_{i=1}^{l_a}\I_{\{T_i>t\}}\exp\left\{-\int_{t}^{T}\lambda_a(s,X_s)\ud s\right\}},
\end{equation}
where the equality follows by \eqref{eq:surv_proc}. Note that the first term in \eqref{eq:passaggi} is also equal to
\begin{align*}
\condespgnn{(l_a-N_t){}_{T-t}p_t} =(l_a-N_t)\condespgnn{{}_{T-t}p_t}=:(l_a-N_t){}_{T-t}\widehat{p}_{t},
\end{align*}
and the second one is equal to
\begin{align*}
\condespgnn{(l_a-N_t)\exp\left\{-\int_{t}^{T}\lambda_a(s,X_s)\ud s\right\}}=(l_a-N_t)\condespgnn{\exp\left\{-\int_{t}^{T}\lambda_a(s,X_s)\ud s\right\}},
\end{align*}
for every $t \in [0,T]$. We refer to ${}_{u}\widehat{p}_{s}$ as the $\bG^N$-projection of the survival process $\{{}_{u}p_s,\ s \ge 0,\ u \ge 0\}$ and we have obtained that for every $t \in [0,T] $, ${}_{T-t}\widehat{p}_t=\condespgnn{\exp\left\{-\int_{t}^{T}\lambda_a(s,X_s)\ud s\right\}}$.

Concluding we have shown that
\[B_t:=\condespgnn{ l_a - N_T}=(l_a-N_t){}_{T-t}\widehat{p}_{t}\quad \forall \ t \in [0,T].\]
\end{proof}
In the sequel we evaluate ${}_{T-t}\widehat{p}_{t}$ in a Markovian setting.

\subsection{The $\bG^N$-projection of the survival process in a Markovian setting}\label{section:GN_conditional_expectation}

On the probability space $(\Omega_2, \G, \P_2)$ we define the process $Y=\{Y_t, t\geq 0\}$, whose dynamics is given by
\begin{equation}\label{eq:Y}
\ud Y_t = -\lambda_a(t, X_t) Y_t \ud t,
\end{equation}
with $Y_0=1$, $\P_2$-a.s., (or equivalently $Y_t=\exp\{-\int_0^{t}\lambda_a(s, X_s) \ud s\}$, for every $t \in [0,T]$). Then, the $\bG^N$-projection of the survival process has the following relationship with the process $Y$:
\[
{}_u\widehat{p}_s=\mathbb E^{\P_2}\left[\left.\frac{Y_{s+u}}{Y_s}\right|\G^N_s\right], \quad u \geq 0, \ s\geq 0,
\]
and in particular ${}_{T-t}\widehat{p}_{t}=\displaystyle \condespgnn{\frac{Y_T}{Y_t}}$, for every $t \in [0, T]$.
\begin{ass}\label{ass:markov}
We assume that the pair $(X,Y)$ is a $(\bG, \P_2)$-Markov process.
\end{ass}

We denote by $\L^{X,Y}$ the Markov generator of the pair $(X,Y)$. Then, by~\cite[Theorem 4.1.7]{ek86} the process
\[
\left\{f(t,X_t,Y_t)-\int_0^t\L^{X,Y}f(u,X_u,Y_u) \ud u,\ t \in [0,T] \right\}
\]
is a $(\bG, \P_2)$-martingale for every function $f(t,x,y)$ in the domain of the operator $\L^{X,Y}$, denoted by $D(\L^{X,Y})$ .

Thanks to the Markovianity of the pair $(X, Y)$, there is a measurable function $h(t,x,y)$ such that
\[
 h(t,X_t, Y_t)=\condespg{\frac{Y_T}{Y_t}}, \quad \forall \ t \in [0, T],
 \]
 and then
 \begin{equation}\label{eq:proiezione_p}
 {}_{T-t}\hat p_t=\espd{h(t,X_t,Y_t)|\G^N_t}, \quad \forall \ t \in [0,T].
 \end{equation}
Since $Y$ is $\bG$-adapted, there is a function $\gamma(t,x,y)$ such that
 \begin{equation}\label{eq:G}
 \gamma(t,X_t,Y_t)=\espd{Y_T|\G_t}, \quad \forall \ t \in [0,T]
 \end{equation}
 and $\ds h(t,X_t, Y_t)=Y_t^{-1} \gamma(t,X_t,Y_t)$, for every $t \in [0,T]$.

Note that the process $\{\gamma(t,X_t,Y_t),\ t \in [0,T]\}$ given in \eqref{eq:G} is a $(\bG, \P_2)$-martingale; 
then we can provide a characterization of the function $\gamma(t,x,y)$ in terms of the solution of a suitable problem with final condition as the following result will show.
\begin{proposition}\label{lemma:caratterizzazioneG}
Let $\widetilde{\gamma}(t,x,y)\in D(\L^{X,Y})$ such that
\begin{equation}\label{eq:problema}
\left\{
\begin{aligned}
\L^{X,Y} \widetilde{\gamma}(t,x,y)&=0\\
\widetilde{\gamma}(T,x,y)&=y.
\end{aligned}
\right.
\end{equation}
Then $\widetilde{\gamma}(t,X_t,Y_t)= \gamma(t,X_t,Y_t)$, for every $t \in [0,T]$, with $\gamma(t,x,y)$ such that \eqref{eq:G} is fulfilled.
\end{proposition}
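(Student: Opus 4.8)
The plan is to read this as a standard verification result for the martingale problem associated with the generator $\L^{X,Y}$, exploiting the terminal-value formulation in \eqref{eq:problema}. First I would invoke the martingale characterization recalled just above the statement: since $\widetilde{\gamma}\in D(\L^{X,Y})$, the process
\[
\left\{\widetilde{\gamma}(t,X_t,Y_t)-\int_0^t \L^{X,Y}\widetilde{\gamma}(u,X_u,Y_u)\,\ud u,\ t\in[0,T]\right\}
\]
is a $(\bG,\P_2)$-martingale by \cite[Theorem 4.1.7]{ek86}. The first equation in \eqref{eq:problema} forces $\L^{X,Y}\widetilde{\gamma}=0$ along the trajectories of $(X,Y)$, so the Lebesgue integral above vanishes identically and $\{\widetilde{\gamma}(t,X_t,Y_t),\ t\in[0,T]\}$ is \emph{itself} a $(\bG,\P_2)$-martingale.

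Next I would combine the martingale property on $[t,T]$ with the terminal condition $\widetilde{\gamma}(T,x,y)=y$, which gives $\widetilde{\gamma}(T,X_T,Y_T)=Y_T$. Concretely,
\[
\widetilde{\gamma}(t,X_t,Y_t)=\condespg{\widetilde{\gamma}(T,X_T,Y_T)}=\condespg{Y_T}=\gamma(t,X_t,Y_t),\quad \forall\ t\in[0,T],
\]
where the second equality uses the terminal condition and the last equality is precisely the defining relation \eqref{eq:G} for $\gamma$. This closes the argument and establishes the claimed pathwise identity between the two processes (note the statement only asserts equality of $\widetilde{\gamma}$ and $\gamma$ evaluated at $(X_t,Y_t)$, not as functions, which is all the martingale argument delivers).

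The only delicate point — and hence the \textbf{main obstacle} — is justifying that $\{\widetilde{\gamma}(t,X_t,Y_t)\}$ is a \emph{true} $(\bG,\P_2)$-martingale on $[0,T]$, rather than merely a local one, so that the conditional-expectation identity with the terminal value $Y_T$ is legitimate. This is however cheaply resolved: since $\lambda_a$ takes values in $(0,\infty)$, the process $Y_t=\exp\{-\int_0^t\lambda_a(s,X_s)\,\ud s\}$ is nonincreasing with $0<Y_t\leq 1$, so $Y_T$ is bounded and the required integrability is automatic; together with the membership $\widetilde{\gamma}\in D(\L^{X,Y})$ and the cited theorem, this upgrades the local martingale to a genuine martingale and validates the final step. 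I would remark in passing that the same scheme simultaneously yields a \emph{uniqueness} statement: any $D(\L^{X,Y})$-solution of \eqref{eq:problema} must coincide, when evaluated along $(X,Y)$, with $\espd{Y_T\,|\,\G_\cdot}$.
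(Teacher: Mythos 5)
Your argument is correct and follows essentially the same route as the paper: both invoke the martingale property of $\widetilde{\gamma}(t,X_t,Y_t)-\int_0^t\L^{X,Y}\widetilde{\gamma}(u,X_u,Y_u)\,\ud u$ (the paper via It\^{o}'s formula, you via the cited \cite[Theorem 4.1.7]{ek86}, which is the same fact), use $\L^{X,Y}\widetilde{\gamma}=0$ to conclude $\widetilde{\gamma}(t,X_t,Y_t)$ is itself a martingale, and close with the terminal condition $\widetilde{\gamma}(T,X_T,Y_T)=Y_T$. Your extra remark on upgrading from a local to a true martingale is a reasonable point the paper glosses over, but it does not change the argument.
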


\begin{proof}
Let $\widetilde{\gamma}(t,x,y)\in D(\L^{X,Y})$ be the solution of \eqref{eq:problema}. By applying It\^{o}'s formula to the process $\widetilde{\gamma}(t,X_t,Y_t)$, we get that the process $\ds \left\{\widetilde{\gamma}(t, X_t, Y_t)- \int_0^t \L^{X,Y}\widetilde{\gamma}(r,X_r,Y_r)\ud r, \ t \in [0,T]\right\}$
is a $(\bG, \P_2)$-martingale. Finally, since $\widetilde{\gamma}(T,X_T,Y_T)=Y_T$, by the martingale property we get the claimed result.
\end{proof}
Now, by \eqref{eq:proiezione_p},  we need to compute the conditional expectation of $h(t,X_t, S_t)$ given the available information $\G^N_t$, for every $t \in[0,T]$. This leads to solve a filtering problem where $N$ is the observation.

For any measurable function $f(t,x,y)$ such that $\espd{|f(t,X_t,Y_t)|}< \infty$, for every $t\in  [0,T]$, we consider the filter $\pi(f)=\left\{\pi_t(f), \ t \geq 0 \right\}$, which is defined by
\begin{equation} \label{def:filtro}
\pi_t(f) : =  \espd{ f(t,X_t,Y_t) | \G^N_t},
\end{equation}
 for each $t \in [0,T]$. We observe that, for any function $f(t,x,y)$ which does not depend on the variable $y$, the filter defined in \eqref{def:filtro} coincides with that given in \eqref{filter}.

 It is known that, in the case of counting observation given by the process $N$,  the filter solves the following Kushner-Stratonovich equation (see e.g. \cite{cco1}):
\begin{equation}\label{eq:KS}
\pi_t(f)=\pi_0(f)+\int_0^t\pi_s(\L^{X,Y} f) \ud s + \int_0^t (\pi_{s^-}(\Lambda))^+ \left\{\pi_{s^-}(\Lambda f)- \pi_{s^-}(\Lambda)\pi_{s^-}(f)+\pi_{s^-}(\bar \L f)\right\}\left(\ud N_s-\pi_{s^-}(\Lambda) \ud s\right),
\end{equation}
for every $f \in D(\L^{X,Y})$ and every $t \geq 0$, where $z^+:= \frac{1}{z} \I_{\{z>0\}}$ and $\bar \L$ is the operator that takes into account possible jump times between $X$ and $N$.

We observe that the $\bG^N$-projection of the survival process can be written in terms of the filter as
\begin{equation}\label{eq:filtro_h}
{}_{T-t}\widehat p_{t}=\pi_t(h) \qquad \forall \ t \in[0, T].
\end{equation}
Then we get the following result which holds in the Markovian case.

\begin{proposition}
Under Assumption \ref{ass:markov}, the $\widetilde{\bH}$-pseudo-optimal strategy $\psi^*=(\theta^*, \eta^*)$ is given by
\begin{align*}
\theta^*_t&=\beta_t(l_a-N_{t^-})\pi_{t^-}(h), \quad \forall \ t \in [0,T]\\
\eta_t^*&= V_t(\psi^*)- \beta_t(l_a-N_t)\pi_{t}(h)S_t, \quad \forall \ t \in [0,T],
\end{align*}
where $\beta$ is given in \eqref{eq:strategiaBeta}, $\pi$ indicates the filter defined in \eqref{def:filtro}, $h(t,x,y)=y^{-1}\gamma(t,x,y)$, $\gamma(t,x,y)$ is the solution of the problem \eqref{eq:problema} and $V(\psi^*)$ is the optimal value process given by \eqref{eq:optimalvalueprocess} in Proposition \ref{prop:pseudo_opt_strategy}, with $B=(l_a-N)\pi(h)$.
\end{proposition}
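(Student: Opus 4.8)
The plan is to obtain the two formulas by specializing the general pseudo-optimal strategy of Proposition \ref{prop:pseudo_opt_strategy} to the Markovian setting and inserting the filtering representation of the process $B$. First I would recall from Proposition \ref{prop:pseudo_opt_strategy} that, for the pure endowment claim $G_T=\xi(l_a-N_T)$, the $\widetilde\bH$-pseudo-optimal strategy is $\theta^*_t=B_{t^-}\beta_t$ and $\eta^*_t=V_t(\psi^*)-B_t\beta_t S_t$, where $\beta$ is the integrand in the F\"{o}llmer-Schweizer decomposition of $\xi$ (characterized in Proposition \ref{prop:strategia_ottimaXi}) and $B_t=\condespgnn{l_a-N_T}$. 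Hence everything reduces to making $B$ explicit under Assumption \ref{ass:markov}.

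Next I would invoke the representation lemma for $B$, giving $B_t=(l_a-N_t)\,{}_{T-t}\widehat p_t$, together with the filter identity \eqref{eq:filtro_h}, i.e. ${}_{T-t}\widehat p_t=\pi_t(h)$, to obtain $B_t=(l_a-N_t)\pi_t(h)$. Here $h(t,x,y)=y^{-1}\gamma(t,x,y)$ and, by Proposition \ref{lemma:caratterizzazioneG}, $\gamma$ is the solution of the boundary value problem \eqref{eq:problema}; the identification with the filter is legitimate since under Assumption \ref{ass:markov} one has ${}_{T-t}\widehat p_t=\espd{h(t,X_t,Y_t)|\G^N_t}=\pi_t(h)$ by \eqref{eq:proiezione_p}. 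Taking left limits in the product form of $B$, and recalling that $\pi_{t^-}(h)$ denotes the left version of the filter while $N_{t^-}$ is the left limit of $N$, then gives $B_{t^-}=(l_a-N_{t^-})\pi_{t^-}(h)$.

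Finally I would substitute these expressions into the formulas of Proposition \ref{prop:pseudo_opt_strategy}: $\theta^*_t=B_{t^-}\beta_t=\beta_t(l_a-N_{t^-})\pi_{t^-}(h)$ and $\eta^*_t=V_t(\psi^*)-B_t\beta_t S_t=V_t(\psi^*)-\beta_t(l_a-N_t)\pi_t(h)S_t$, which are precisely the claimed identities, with $V(\psi^*)$ inherited from \eqref{eq:optimalvalueprocess} and $B=(l_a-N)\pi(h)$.

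Since each ingredient has already been proved, there is no genuine analytic obstacle: the statement is in essence a corollary assembling Proposition \ref{prop:pseudo_opt_strategy}, the representation lemma for $B$, the filtering identity \eqref{eq:filtro_h}, and the characterization of $\gamma$ in Proposition \ref{lemma:caratterizzazioneG}. The one point I would treat with care is the factorization of the left limit, namely that $B_{t^-}$ equals the predictable product $(l_a-N_{t^-})\pi_{t^-}(h)$; this is the step where the c\`{a}dl\`{a}g regularity of $N$ and of the filter, together with the definition of $\pi_{t^-}(h)$, must be used, but it is otherwise routine.
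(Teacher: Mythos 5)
Your proposal is correct and follows exactly the route the paper intends: the paper states this proposition without proof precisely because it is the direct assembly of Proposition \ref{prop:pseudo_opt_strategy}, the lemma giving $B_t=(l_a-N_t)\,{}_{T-t}\widehat p_t$, and the filter identity \eqref{eq:filtro_h}. Your additional care about the left-limit factorization $B_{t^-}=(l_a-N_{t^-})\pi_{t^-}(h)$ is a reasonable (and harmless) extra check relying on the c\`{a}dl\`{a}g regularity of $N$ and of the filter.
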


As mentioned in the Introduction, filtering problems with counting observations have been widely investigated in the literature.  In \cite{cg06}, for example, an explicit representation of the filter is obtained by the Feynman-Kac formula using the linearization method introduced by \cite{KMM}.  In the sequel we apply this procedure to our framework.

To obtain a similar representation for our model we write down the Kushner-Stratonovich equation solved by the filter \eqref{eq:KS},  between two consecutive jump times of the counting process $N$.
We denote by $\{\tau_i\}_{i=1,\ldots,l_a}$ the ordered sequence of the jump times of $N$; then for $t \in [\tau_n, \tau_{n+1})$, equation  \eqref{eq:KS} becomes
\begin{equation}\label{eq:KS1}
\pi_t(f)=\pi_{\tau_n}(f)+\int_{\tau_n} ^t \left( \pi_s(\L_0 f) - \pi_{s}(\Lambda f)- \pi_{s}(\Lambda)\pi_{s}(f) \right) \ud s,
\end{equation}

where we indicate by $\L_0$ the operator given by  $\L_0 f = \L^{X,Y} f - \bar \L f$  for every $f\in D(\L^{X,Y})$, and in particular, if common jumps times between $X$ and $N$ are not allowed, then $\L_0$ coincides with the Markov generator of the pair $(X,Y)$, i.e. $\L_0 = \L^{X,Y}$.

At any jump time of $N$, say $\tau_n$, also the process $\pi$ exhibits a jump,  and its jump-size is given by
\begin{equation}\label{eq:KS3}\pi_{\tau_n}(f) - \pi_{\tau_{n}^-}(f) = (\pi_{\tau_n^-}(\Lambda))^+ \left\{\pi_{\tau_n^-}(\Lambda f)- \pi_{\tau_n^-}(\Lambda)\pi_{\tau_n^-}(f)+\pi_{\tau_n^-}(\bar \L f)\right\}.
\end{equation}

Note that  $\pi_{\tau_n}(f)$ is completely determined by the knowledge of $\pi_{t}(f)$ for every $t$ in the interval  $[\tau_{n-1}, \tau_{n})$. Indeed, for any function $f$,  $\pi_{\tau_{n}^-}(f) = \lim_{ t \to  \tau_{n}^-} \pi_{t}(f).$ Hence, due to the recursive structure of  equation \eqref{eq:KS}, to our aim we only need to solve equation \eqref{eq:KS1} for every $t \in [\tau_n, \tau_{n+1})$ and $n=0,\ldots, l_a-1$.

Now, we denote by $\rho^n=\{\rho^n_t, \ t \in [0,T]\}$ a measure-valued process  that solves the following equation
\begin{equation}\label{eq:KS2}
\rho^n_t (f)=\pi_{\tau_n}(f) +\int_{\tau_n}^t \left[ \rho^n_s(\L_0 f) -\rho^n_s(\Lambda f)\right] \ud s, \quad  t \in  [\tau_n, \tau_{n+1}),  n=0,\ldots, l_a-1
\end{equation}

and such that $\rho^n_t(1) >0$, for every $t \in [0,T]$. It is not difficult to verify that the process $\ds \left\{\frac{\rho^n_t (f)}{\rho^n_t(1)}, \ t  \in [\tau_n, \tau_{n+1})\right\}$ solves equation \eqref{eq:KS1}.
Therefore, if  uniqueness of the solution to the Kushner-Stratonovich equation holds, we can characterize the filter via the linear equation  \eqref{eq:KS2}.
By Theorem 3.3 in \cite{KO} we get the following result.

\begin{proposition}\label{Uni}
Assume that the Martingale Problem for the operator  $\L^{X,Y}$ is well posed and that there exists a domain $D_0$ for $\L^{X,Y}$ such that for every function $ f(t,x,y) \in D_0$,  $\L^{X,Y}f \in C_b([0,T] \times \R \times \R^+)$.
Then weak uniqueness for the Kushner-Stratonovich equation  \eqref{eq:KS} holds.
\end{proposition}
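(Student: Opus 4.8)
The plan is to recognize Proposition \ref{Uni} as an instance of the general uniqueness theory for nonlinear filtering equations developed by Kurtz and Ocone, and to verify that the hypotheses of \cite[Theorem 3.3]{KO} are met in the present point-process observation setting. The guiding principle is that uniqueness of the filter is inherited from well-posedness of the martingale problem for the signal, once the observation is expressed as a functional of the signal. In our case the signal is the Markov pair $(X,Y)$ with generator $\L^{X,Y}$, and the observation is the counting process $N$ whose $(\bG,\P_2)$-intensity $\Lambda_t=(l_a-N_{t^-})\lambda_a(t,X_{t^-})$ is a measurable function of the signal, so the framework applies.

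First I would recast the Kushner-Stratonovich equation \eqref{eq:KS} as a \emph{filtered martingale problem} in the sense of \cite{KO}: a probability measure-valued process $\{\mu_t,\ t\ge 0\}$ is declared a solution if, for every $f\in D_0$, the process obtained by subtracting from $\mu_t(f)$ the compensated drift and jump terms on the right-hand side of \eqref{eq:KS} is a $\bG^N$-martingale. The filter $\pi$ is by construction one such solution, so the content of the proposition is that it is the only one. The second step is to check the standing hypotheses of \cite[Theorem 3.3]{KO}: well-posedness of the martingale problem for $\L^{X,Y}$ is assumed directly; the requirement $\L^{X,Y}f\in C_b([0,T]\times\R\times\R^+)$ for $f\in D_0$ supplies the regularity of the test functions needed to control the generator term; and the integrability of the observation intensity is guaranteed by \eqref{eq:integrab_intensita2}, which yields $\espd{\int_0^{T^*}\Lambda_t\,\ud t}\le l_a<\infty$, together with the positivity and finiteness of $\lambda_a$ built into Assumption \ref{ass:cond_indep_lifetimes}.

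The main obstacle I expect lies in two related points. The first is to confirm that $D_0$ is measure-determining, so that uniqueness of the real-valued processes $\{\mu_t(f)\}$ for all $f\in D_0$ propagates to uniqueness of the measure-valued process $\mu$ itself; this is precisely where the link between well-posedness of the martingale problem and the richness of the domain is exploited, since a domain on which the martingale problem is well posed separates the relevant laws on the signal state space. The second is to handle correctly the extra jump operator $\bar\L$ appearing in \eqref{eq:KS}, which encodes possible common jump times between $X$ and $N$: one must verify that its contribution to the filtering equation is absorbed into the filtered martingale problem without spoiling the boundedness and continuity properties required to quote \cite[Theorem 3.3]{KO}. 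Once these structural points are settled, weak uniqueness for \eqref{eq:KS} follows directly from the cited theorem.
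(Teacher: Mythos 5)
Your proposal follows exactly the route the paper takes: the paper's entire proof is the citation of \cite[Theorem 3.3]{KO}, and your elaboration --- recasting \eqref{eq:KS} as a filtered martingale problem, verifying well-posedness of the martingale problem for $\L^{X,Y}$, the regularity of $D_0$, and the integrability of the intensity via \eqref{eq:integrab_intensita2} --- is precisely the verification that citation implicitly relies on. The argument is correct and essentially identical in approach, just spelled out in more detail than the paper bothers to.
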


Finally, we assume  that the Martingale Problem for the operator $\L_0$ admits a solution, then, by the Feynman-Kac formula we obtain the following representation for the process $\rho^n(f)$.
Note that the assumption is clearly fulfilled when $X$ and $N$ do not have common jump times, since in that case $\L_0 = \L^{X,Y}$.

\begin{proposition}\label{F-K}
For any $(x,y) \in \R \times \R^+$
assume that the Martingale Problem for the operator $\L_0$ with initial data  $(x,y)$ admits a solution $P_{(x,y)}$ in
 $D_{[0, + \infty)} (\R \times \R^+)$\footnote{$D_{[0, + \infty)} (\R \times \R^+)$ is the space of c\`{a}dl\`{a}g functions from $[0, + \infty)$ into $\R \times \R^+$}.\\
Then a solution of \eqref{eq:KS2}, such that   $\rho^n_t(1) >0$  for any $t \in (\tau_n, \tau_{n+1})$,  is given by
$$
\rho^n_t(f)  = \int_{\R^2} \Psi^n_t(\tau_n, x, y) (f) \pi_{\tau_n} (\ud x,\ud y), \quad \forall \ t \in [0,T],
$$
with $\ds\Psi^n_t(s, x, y) (f)  = \mathbb E^{P_{(x,y)}} \bigg [ f(t, Z^1_t, Z^2_t) \exp\left(- \int_s^t (l_a - n) \lambda_a(r, Z^1_r) \ud r\right) \bigg ]$, where $(Z^1, Z^2)$ denotes the canonical process on  $D_{[0, + \infty)} (\R \times \R^+)$.
\end{proposition}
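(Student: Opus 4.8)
The statement is a Feynman-Kac representation for the \emph{linear} equation \eqref{eq:KS2}, so the plan is simply to verify that the candidate $\rho^n_t(f):=\int_{\R^2}\Psi^n_t(\tau_n,x,y)(f)\,\pi_{\tau_n}(\ud x,\ud y)$ solves \eqref{eq:KS2} on each inter-jump interval; weak uniqueness of the filter is already secured by Proposition \ref{Uni}. I fix $n\in\{0,\ldots,l_a-1\}$ and work on $[\tau_n,\tau_{n+1})$, where $N_{s^-}=n$ and hence $\Lambda_s=(l_a-n)\lambda_a(s,X_s)$; consequently the term $\rho^n_s(\Lambda f)$ in \eqref{eq:KS2} coincides with $\rho^n_s\big((l_a-n)\lambda_a f\big)$. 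On the canonical space $D_{[0,+\infty)}(\R\times\R^+)$ under $P_{(x,y)}$ I introduce the continuous finite-variation multiplicative functional
\[
D^{(n)}_{\tau_n,t}:=\exp\left(-\int_{\tau_n}^t (l_a-n)\lambda_a(r,Z^1_r)\,\ud r\right),\qquad t\ge \tau_n,
\]
so that $\Psi^n_t(\tau_n,x,y)(f)=\mathbb E^{P_{(x,y)}}\big[f(t,Z^1_t,Z^2_t)\,D^{(n)}_{\tau_n,t}\big]$ and $D^{(n)}_{\tau_n,\tau_n}=1$.

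The core is a standard Feynman-Kac martingale identity. Because $(Z^1,Z^2)$ solves the martingale problem for $\L_0$, the process $f(t,Z^1_t,Z^2_t)-\int_{\tau_n}^t\L_0 f(r,Z^1_r,Z^2_r)\,\ud r$ is a $P_{(x,y)}$-martingale for $f\in D(\L^{X,Y})$; since $D^{(n)}_{\tau_n,\cdot}$ is continuous and of finite variation, with $\ud D^{(n)}_{\tau_n,t}=-(l_a-n)\lambda_a(t,Z^1_t)D^{(n)}_{\tau_n,t}\,\ud t$, integration by parts produces no bracket term and shows that
\[
f(t,Z^1_t,Z^2_t)\,D^{(n)}_{\tau_n,t}-\int_{\tau_n}^t\big(\L_0 f-(l_a-n)\lambda_a f\big)(r,Z^1_r,Z^2_r)\,D^{(n)}_{\tau_n,r}\,\ud r
\]
is again a $P_{(x,y)}$-martingale. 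Taking $\mathbb E^{P_{(x,y)}}$, using that its value at $t=\tau_n$ equals $f(\tau_n,x,y)$, and applying Fubini to the time integral, I recognize $\mathbb E^{P_{(x,y)}}[(\L_0 f)(r,Z^1_r,Z^2_r)D^{(n)}_{\tau_n,r}]=\Psi^n_r(\tau_n,x,y)(\L_0 f)$ and $\mathbb E^{P_{(x,y)}}[(l_a-n)\lambda_a(r,Z^1_r)f(r,Z^1_r,Z^2_r)D^{(n)}_{\tau_n,r}]=\Psi^n_r(\tau_n,x,y)(\Lambda f)$, whence the pointwise integral equation
\[
\Psi^n_t(\tau_n,x,y)(f)=f(\tau_n,x,y)+\int_{\tau_n}^t\big[\Psi^n_r(\tau_n,x,y)(\L_0 f)-\Psi^n_r(\tau_n,x,y)(\Lambda f)\big]\,\ud r.
\]
Integrating this identity against $\pi_{\tau_n}(\ud x,\ud y)$ and using Fubini once more (so that $\int f(\tau_n,x,y)\,\pi_{\tau_n}(\ud x,\ud y)=\pi_{\tau_n}(f)$) reproduces \eqref{eq:KS2} exactly, which proves that $\rho^n$ is a solution.

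It remains to check the positivity $\rho^n_t(1)>0$: taking $f\equiv1$ gives $\Psi^n_t(\tau_n,x,y)(1)=\mathbb E^{P_{(x,y)}}[D^{(n)}_{\tau_n,t}]$, the expectation of a strictly positive random variable, since $\lambda_a$ takes values in $(0,\infty)$ by Assumption \ref{ass:cond_indep_lifetimes} and the integral in the exponent is finite $P_{(x,y)}$-a.s.\ under the integrability condition \eqref{eq:integrabilita_intensita}; integrating a strictly positive quantity against the probability measure $\pi_{\tau_n}$ keeps it strictly positive for $t\in(\tau_n,\tau_{n+1})$. The main obstacle is purely analytic: justifying that the local martingale above is a genuine martingale and that the two applications of Fubini are legitimate. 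Both reduce to dominated-convergence estimates, most cleanly handled under boundedness hypotheses on $f$ and $\L_0 f$ of the type feeding Proposition \ref{Uni}; then $|f|$ and $|\L_0 f|$ dominate the integrands uniformly because $0\le D^{(n)}_{\tau_n,t}\le1$, and Fubini applies since $\pi_{\tau_n}$ is a probability measure.
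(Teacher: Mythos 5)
Your proof is correct in substance, but it takes a different (self-contained) route from the paper: the paper's entire proof consists of observing that $\Lambda_t=(l_a-n)\lambda_a(t,X_t)$ on $(\tau_n,\tau_{n+1})$ and then invoking Proposition 3.2 of \cite{cg06}, whereas you reconstruct the content of that cited result from scratch via the standard Feynman--Kac martingale argument (multiply by the continuous finite-variation discount functional, integrate by parts with no bracket term, take expectations, apply Fubini twice against $P_{(x,y)}$ and $\pi_{\tau_n}$). What your version buys is transparency and independence from the external reference; what the paper's version buys is brevity and the delegation of exactly the technical points you flag at the end --- that the compensated process is a true martingale rather than a local one, and that the Fubini interchanges are legitimate --- to hypotheses already packaged in \cite{cg06}. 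Two small points to tighten: first, when you say the value of the martingale at $t=\tau_n$ equals $f(\tau_n,x,y)$ you are implicitly taking the martingale problem for $\L_0$ to be posed with initial time $\tau_n$ (so that $Z_{\tau_n}=(x,y)$ under $P_{(x,y)}$); this convention should be stated, since the proposition as written speaks of ``initial data $(x,y)$'' on $D_{[0,+\infty)}(\R\times\R^+)$. Second, your appeal to \eqref{eq:integrabilita_intensita} for the strict positivity $\rho^n_t(1)>0$ is not quite in order: that condition is an integrability statement about the original process $X$ under $\P_2$, not about the canonical process under $P_{(x,y)}$; what you actually need is that $\int_{\tau_n}^t(l_a-n)\lambda_a(r,Z^1_r)\,\ud r<\infty$ with positive $P_{(x,y)}$-probability for $\pi_{\tau_n}$-a.e.\ $(x,y)$, which is a (mild, but distinct) assumption on the solution of the martingale problem. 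Neither issue is a genuine gap in the argument's structure; both are the kind of regularity bookkeeping the paper sidesteps by citation.
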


\begin{proof}
First observe that $\Lambda_t= (l_a - n) \lambda_a(t, X_t)$ for  $t \in  (\tau_n, \tau_{n+1})$. Then the thesis follows by applying Proposition 3.2 in \cite{cg06}.
\end{proof}

\section{Application 2: term insurance contracts}

In this Section we aim to compute the $\widetilde \bH$-pseudo-optimal strategy $\psi^*$ for a term insurance contract, whose payoff, denoted by $G_T$, is given in \eqref{G1}.\\
Similarly to the case of pure endowment contracts, the optimal value process $V(\psi^*)$ is characterized by relationship \eqref{cla1}:
\begin{align*}
V_t(\psi^*) &= \espqq{ G_T \Big{|} \widetilde \H_t} = \espqq{\int_0^T g(u,S_u)\ud N_u \Big{|} \widetilde{H}_t} \nonumber\\
&=\espqq{\int_0^T [ g(u,S_u) - g(u, S_{u^-}) ]  \ud N_u  | \widetilde \H_t}  +
\espqq{ \int_0^T g(u, S_{u^-}) \ud N_u  \Big{|} \widetilde \H_t}\nonumber\\
&=   \int_0^t g(u,S_{u^-}) \ud N_u + \espqq{ \int_t^T g(u,S_{u^-}) \ud N_u  \Big{|} \widetilde \H_t},\nonumber
\end{align*}
for every $t \in [0,T]$, where the last equality is due to the fact that $S$ and $N$ do not have common jump times.  Since the process $\{g(t, S_{t^-}), \ t \in (0,T]\}$ is $\widetilde \bH$-predictable we get
\begin{align}
V_t(\psi^*) &=  \int_0^t g(u,S_{u^-}) \ud N_u + \espqq{ \int_t^Tg(u, S_{u^-})   \Lambda_u \ud u \Big{|} \widetilde \H_t}\nonumber\\
&= \int_0^t g(u,S_{u^-}) \ud N_u  +   \int_t^T  \espqq{ g(u, S_{u})  \Lambda_u  | \widetilde \H_t}  \ud u\nonumber\\
&=  \int_0^t g(u,S_{u^-}) \ud N_u  +   \int_t^T \espp{g(u, S_{u}) |\F_t} \espd{ \Lambda_u | \G^N_t} \ud u\nonumber\\
&= \int_0^t g(u,S_{u^-}) \ud N_u  +   \int_t^T V(u,t) B_t(u) \ud u,\label{eq:prodotto_BuVu}
\end{align}
for every $t \in [0,T]$, where for every $u\in [0,T]$ the processes $B(u)=\{B_t(u), \ t \in [0,u]\}$ and $V(u)= \{V(u,t), \ t \in [0,u]\}$ are respectively given by
\begin{gather}
B_t(u):=\espd{ \Lambda_u | \G^N_t},\quad \forall \ t\in [0,u]  \label{eq:Bu}\\
V(u,t) :=\espp{g(u,S_u) |\F_t} \forall \ t\in [0,u] \label{eq:Vu}.
\end{gather}

We recall that $g(u, S_{u})\in L^2(\F_u, \P_1)$; then, it admits the F\"{o}llmer-Schweizer  decomposition with respect to $S$ which is given by
  \begin{equation}\label{eq:fs_gu}
 g(u,S_u)= V_0 + \int_0^u \beta_r(u) \ud S_r + A^1_u(u)\quad \P_1-\mbox{a.s.}, \forall \ u \in [0,T],
  \end{equation}
where $V_0:=\espu{g(u,S_u)|\F_0}\in L^2(\F_0, \P_1)$, $\beta(u):=\{\beta_t(u), \ t \in [0,u]\}\in \Theta(\bF)$  and $A^1(u):=\{A^1_t(u), \ t \in [0,u]\}$ is a square-integrable $(\bF, \P_1)$-martingale with $A^1_0(u)=0$, strongly orthogonal to $M$.
Since $S$ is an $(\bF, \P^*)$-martingale, thanks to \eqref{eq:fs_gu}, we get that for every $ 0\leq t \leq u\leq T$
\begin{equation}\label{eq:FS_gu_cond}
V(u,t)=\espp{g(u,S_u) |\F_t}=\espp{V_0 + \int_0^u \beta_r(u) \ud S_r + A^1_u(u)\Big|\F_t}=V_0 + \int_0^t \beta_r(u) \ud S_r +  A^1_t(u),
\end{equation}
where  $A^1(u)$ turns out to be an $(\bF, \P^*)$-martingale by definition of MMM.

\begin{ass}
We assume that there exists a process $\gamma(u)=\{\gamma_t(u), 0\leq t\leq u \}$ for every $u \in [0,T]$ such that
\[A^1_t(u)=\int_0^t \gamma_r(u) \ud A^1_r, \quad t \in [0,u],\]
and that $\espu{\int_0^T \gamma^2_t(u) \ud \langle A^1\rangle_t}<\infty$, where $A^1$ is a square-integrable $(\bF, \P_1)$-martingale, strongly orthogonal to $M$.
\end{ass}
This assumption is rather general as it is satisfied whenever a martingale representation theorem for $(\bF, \P_1)$-martingales holds, see e.g. \cite{vv} for L\'{e}vy driven market models.
\begin{proposition}[The $\widetilde \bH$-pseudo-optimal strategy]
The F\"{o}llmer-Schweizer decomposition of the insurance contingent claim $G_T=\int_0^T g(u,S_u)\ud N_u$ is given by
\[
G_T=G_0+\int_0^T  \int_t^T B_{t^-}(u)\beta_t(u) \ud u   \ud S_t + K^1_T \quad \Q-\mbox{a.s.},
\]
where
\begin{align}\label{eq:mgK1}
K^1_t= \int_0^t\left\{g(r, S_{r^-})+\int_r^T  V(u,r^-)\Gamma_r(u)\ud u\right\}\left(\ud N_r-\pi_{r^-}(\Lambda)\ud r\right)+\int_0^t \int_r^T B_{r^-}(u)\gamma_r(u) \ud u \ \ud A^1_r,
\end{align}
for every $t \in [0,T]$, $G_0=\espq{G_T|\widetilde{\H}_0}$, $B(u)$ is given by \eqref{eq:Bu}, $\beta(u)$ is the integrand with respect to $S$ in the F\"{o}llmer-Schweizer decomposition of $g(u, S_u)$, see \eqref{eq:fs_gu}, and $\Gamma(u)$ is a suitable $(\bG^N, \P_2)$-predictable process, see \eqref{eq:rappBu} below.\\
Then, the $\widetilde \bH$-pseudo-optimal strategy $\psi^*=(\theta^*, \eta^*)$ is given by
\begin{gather*}
\theta_t^*= \int_t^T B_{t^-}(u)\beta_t(u) \ud u, \quad \forall \ t \in [0,T],\\
\eta_t^*= V_t(\psi^*)-\left(\int_t^T B_{t^-}(u)\beta_t(u) \ud u\right)S_t , \quad \forall \ t \in [0,T],
\end{gather*}
and the optimal value process $V(\psi^*)$ is given by
\[
V_t(\psi^*)=G_0+\int_0^t\theta^*_r\ud S_r+ K_t^1, \quad \forall \ t \in [0,T].
\]
\end{proposition}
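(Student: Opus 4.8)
The plan is to mirror the structure of the pure-endowment proof (Proposition \ref{prop:pseudo_opt_strategy}), since the term-insurance case is the natural ``integrated over $u$'' analogue. Starting from the expression \eqref{eq:prodotto_BuVu} for the optimal value process,
\[
V_t(\psi^*) = \int_0^t g(u,S_{u^-}) \ud N_u + \int_t^T V(u,t) B_t(u) \ud u,
\]
I would compute $\ud V_t(\psi^*)$ and match it against the claimed F\"{o}llmer-Schweizer decomposition. The differentiation splits into two pieces: the jump term $\int_0^t g(u,S_{u^-})\ud N_u$ contributes $g(t,S_{t^-})\ud N_t$, and the double integral $\int_t^T V(u,t) B_t(u)\ud u$ must be differentiated carefully, taking into account both the explicit dependence of the lower limit on $t$ (which produces a $-V(t,t)B_t(t)\ud t$ Leibniz term) and the $t$-dependence inside the integrand through the dynamics of $V(u,t)$ and $B_t(u)$.

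First I would record the dynamics of the two ingredient processes. From \eqref{eq:FS_gu_cond} and the Assumption on the representation of $A^1(u)$, the $(\bF,\P^*)$-martingale $V(u,t)$ satisfies $\ud_t V(u,t) = \beta_t(u)\ud S_t + \gamma_t(u)\ud A^1_t$. For $B_t(u) = \espd{\Lambda_u|\G^N_t}$, which is a $(\bG^N,\P_2)$-martingale in $t$ for each fixed $u$, I would invoke the Martingale Representation Theorem exactly as in \eqref{eq:B} to obtain a $(\bG^N,\P_2)$-predictable process $\Gamma(u)$ with
\[
\ud_t B_t(u) = \Gamma_t(u)\left(\ud N_t - \pi_{t^-}(\Lambda)\ud t\right);
\]
this is the analogue of \eqref{eq:rappBu}. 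Then, applying It\^o's product rule to $V(u,t)B_t(u)$ under $\widetilde\bH$ and using the independence of the financial and insurance markets (Assumption \ref{ass:indipendenza_assicurazioni}), the bracket $\langle V(u,\cdot),B(u)\rangle$ and the common-jump correction both vanish, so the product decomposes cleanly into an $\ud S_t$ part $B_{t^-}(u)\beta_t(u)$, an $\ud A^1_t$ part $B_{t^-}(u)\gamma_t(u)$, and a compensated-jump part $V(u,t^-)\Gamma_t(u)(\ud N_t - \pi_{t^-}(\Lambda)\ud t)$.

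After integrating these differentials back over $u\in[t,T]$ and assembling, I expect the $\ud S_t$ coefficient to collapse to $\theta^*_t = \int_t^T B_{t^-}(u)\beta_t(u)\ud u$, while the remaining terms build up $\ud K^1_t$ exactly as in \eqref{eq:mgK1}; in particular the Leibniz boundary term $-V(t,t)B_t(t)$ from the lower limit should cancel against the drift contribution $g(t,S_{t^-})\pi_{t^-}(\Lambda)$ coming from compensating $g(t,S_{t^-})\ud N_t$, leaving only the compensated martingale increments. The main obstacle, and the step demanding the most care, is this interchange of the $\ud u$-integration with stochastic differentiation: justifying a stochastic Fubini theorem for $\int_t^T V(u,t)B_t(u)\ud u$ requires the joint integrability of $\beta_t(u)$, $\gamma_t(u)$ and $\Gamma_t(u)$ over $(t,u)$, and the Leibniz boundary terms must be tracked honestly so that the $\ud t$ finite-variation pieces cancel and $K^1$ genuinely emerges as a martingale. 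To finish I would verify admissibility, namely $\theta^*\in\Theta(\widetilde\bH)$ using $0\le B_t(u)\le$ the bound on $\Lambda$ together with \eqref{eq:quadrato_integr_sup_g} and the integrability of $\beta(u)$, and that $K^1$ is a square-integrable $(\widetilde\bH,\Q)$-martingale strongly orthogonal to $M$, the orthogonality following since its $\ud N$-part lives on the insurance space and its $\ud A^1$-part is orthogonal to $M$ by construction. The conclusion then follows from Proposition \ref{prop:strategy-FS}.
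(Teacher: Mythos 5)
Your proposal follows essentially the same route as the paper's proof: martingale representation of $B(u)$ against the compensated counting process, It\^{o}'s product rule on $V(u,t)B_t(u)$ with the cross terms vanishing by independence, the Leibniz boundary term $-V(t,t)B_t(t)\,\ud t=-g(t,S_t)\pi_t(\Lambda)\,\ud t$ combining with $g(t,S_{t^-})\,\ud N_t$ into a compensated increment, a Fubini interchange to extract $\theta^*_t=\int_t^T B_{t^-}(u)\beta_t(u)\,\ud u$ as the $\ud S$-integrand, and finally the admissibility estimate and the verification that $K^1$ is a square-integrable orthogonal martingale (which the paper delegates to an appendix lemma). The plan is correct and complete in outline, matching the paper step for step.
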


\begin{proof}
Note that, analogously to the previous case of pure endowment contracts, the process $B(u)$  is a square-integrable $(\bG^N, \P_2)$-martingale, and therefore, thanks to the Martingale Representation Theorem, for every $u \in [0,T]$ there exists a $(\G^N, \P_2)$-predictable process $\Gamma(u)=\{\Gamma_t(u), \ t \in [0,u]\}$ such that  $\ds \espd{\int_0^u \Gamma_r(u)^2 \pi_{r}(\Lambda)\ud r}<\infty$ for every $u \in [0,T]$ and
\begin{equation}\label{eq:rappBu}
B_t(u)=B_0(u)+\int_0^t\Gamma_{r}(u) \left(\ud N_r-\pi_{r^-}(\Lambda)\ud r\right), \quad 0\leq t\leq u\leq T,
\end{equation}
where $B_0(u):=\espd{\Lambda_u}$.
We apply the It\^{o} product rule in equation \eqref{eq:prodotto_BuVu} and obtain
\begin{align*}
&\ud V_t(\psi^*)\\
&=g(t,S_{t^-}) \ud N_t - V(t,t) B_{t^-}(t) \ud t + \int_t^T\left\{V(u,t^-)\ud B_{t}(u)+B_{t^-}(u) \ud V(u,t)\right\}\ud u\\
&=g(t,S_{t^-}) \ud N_t - V(t,t) B_{t^-}(t) \ud t +\int_t^T  \left\{ V(u,t^-)\Gamma_t(u)\left(\ud N_t-\pi_{t^-}(\Lambda)\ud t\right)\right\}\ud u\\
&\qquad +\int_t^T\left\{B_{t^-}(u) \beta_t(u) \ud S_t\right\}\ud u  + \int_t^T \left\{B_{t^-}(u)\gamma_t(u)\ud A^1_t \right\}\ud u.
\end{align*}
We recall that $V(t,t)=g(t, S_t)$ and $B_t(t)=\pi_{t}(\Lambda)$; then, more precisely, for every $t \in [0,T]$,  in integral form we get
\begin{align*}
V_t(\psi^*)=&G_0+\int_0^tg(r, S_{r^-})(\ud N_r- \pi_{r^-}(\Lambda)\ud r)+ \int_{t}^T\left\{\int_0^t V(u, r^-)\Gamma_r(u)(\ud N_r- \pi_{r^-}(\Lambda)\ud r)\right\}\ud u\\
&+ \int_t^T\left\{\int_0^tB_{r^-}(u)\beta_r(u)\ud S_r\right\}\ud u+ \int_t^T\left\{\int_0^tB_{r^-}(u)\gamma_r(u)\ud A^1_r\right\}\ud u.
\end{align*}
By Fubini's Theorem we obtain
\begin{align*}
V_t(\psi^*)=&G_0+\int_0^tg(r, S_{r^-})(\ud N_r- \pi_{r^-}(\Lambda)\ud r)+ \int_{0}^t\left\{\int_r^T V(u, r^-)\Gamma_r(u)\ud u\right\}(\ud N_r- \pi_{r^-}(\Lambda)\ud r)\\
&+ \int_0^t\left\{\int_r^TB_{r^-}(u)\beta_r(u)\ud u\right\}\ud S_r+ \int_0^t\left\{\int_r^TB_{r^-}(u)\gamma_r(u)\ud u \right\}\ud A^1_r, \forall \ t \in [0,T] .
\end{align*}
Note that, the process $\ds \left\{\int_t^TB_{t^-}(u) \beta_t(u) \ud u , \ t \in [0,T]\right\}$ is $(\widetilde{\bH}, \Q^*)$-predictable; then to prove that  $\left\{\int_t^TB_{t^-}(u) \beta_t(u) \ud u , \ t \in [0,T]\right\}$ belongs to $\Theta(\widetilde\bH)$, we consider the following estimation
\begin{align*}
&\espq{\int_0^T\left\{\int_r^TB^2_{r^-}(u)\beta^2_r(u)\ud u \right\}\ud \langle M\rangle_r+ \left(\int_0^T\left\{\int_r^TB_{r^-}(u)\beta_r(u)\ud u \right\} \alpha_r \ud \langle M\rangle_r\right)^2}\\
& =\espq{\int_0^T\left\{\int_0^u B^2_{r^-}(u)\beta^2_r(u)\ud \langle M\rangle_r \right\}\ud u + \left(\int_0^T\left\{\int_0^uB_{r^-}(u)\beta_r(u) \alpha_r \ud \langle M\rangle_r \right\} \ud u\right)^2}\\
&\leq \espq{\int_0^T\sup_{0\leq r \leq u}B^2_{r}(u) \left\{\int_0^u \beta^2_r(u)\ud \langle M\rangle_r \right\}\ud u }+ \espq{\left(\int_0^T\sup_{0\leq r\leq u} B_{r}(u) \left\{\int_0^u \beta_r(u) \alpha_r \ud \langle M\rangle_r \right\} \ud u\right)^2}\\
& \leq \int_0^T \espd{\sup_{0\leq r \leq u} B^2_{r}(u)}\espu{\int_0^u \beta^2_r(u)\ud \langle M\rangle_r}\ud u \\
&\qquad + \int_0^T\espd{(\sup_{0\leq r\leq u}|B_{r}(u)|)^2} \espu{(\int_0^u \beta_r(u) \alpha_r \ud \langle M\rangle_r)^2} \ud u<\infty,
\end{align*}
since
\begin{equation}\label{eq:disug_Bu}
 \espd{\sup_{0\leq r\leq u}B^2_{r}(u)}\leq \espd{\left(\sup_{0\leq r\leq u}|B_{r}(u)|\right)^2}\leq c \espd{\langle B(u)\rangle_u}=c\espd{\int_0^u\Gamma^2_r(u)\pi_r(\Lambda)\ud r}<\infty,
\end{equation}
for any positive constant $c$, thanks to the Burkholder-Davis-Gundy inequality, $\beta(u)\in \Theta(\bF)$ and the time interval $[0,T]$ is finite.\\
Finally, since the process $K^1$ in \eqref{eq:mgK1} is a square-integrable $(\widetilde{\bH}, \Q)$-martingale, see Lemma \ref{lemma:mgK1} in Appendix \ref{app:aspettazione_condizionata}, then we get the statement.
\end{proof}
The next result characterizes the process $B(u)$.
\begin{lemma}\label{lemma:Bu}
For every $u \in [0,T]$ the process $B(u)$ defined in \eqref{eq:Bu} is given by
\[
B_t(u):=(l_a-N_t)\espd{\lambda_a(u,X_u)e^{-\int_t^u\lambda_a(r,X_r)\ud r}|\G^N_t}, \quad \forall t \in [0,u].
\]
\end{lemma}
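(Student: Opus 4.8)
The plan is to follow the same line of reasoning used in the proof of the analogous representation $B_t = (l_a - N_t)\,{}_{T-t}\widehat{p}_{t}$ obtained for pure endowment contracts. I start from the definition $B_t(u) = \condespgnn{\Lambda_u}$, with $\Lambda_u = (l_a - N_{u^-})\lambda_a(u, X_{u^-})$ as in \eqref{eq:intensita_N}. Since for each fixed $u$ the remaining lifetimes admit no atom at $u$ (by Assumption \ref{ass:cond_indep_lifetimes} the conditional survival functions are absolutely continuous in $t_i$), we have $N_{u^-} = N_u$ and $X_{u^-} = X_u$ $\P_2$-a.s., so that $\Lambda_u = \lambda_a(u, X_u)\sum_{i=1}^{l_a}\I_{\{T_i>u\}}$ $\P_2$-a.s.

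Next I would exploit the tower property with the intermediate $\sigma$-algebra $\G^R_t \vee \G^X_{\infty}$, which contains $\G^N_t$ since $\G^N_t \subseteq \G^R_t$. Because $\lambda_a(u,X_u)$ is $\G^X_{\infty}$-measurable, it factors out of the inner conditional expectation, leaving
\[
\condespgnn{\Lambda_u} = \condespgnn{\lambda_a(u,X_u)\sum_{i=1}^{l_a}\espd{\I_{\{T_i>u\}} \mid \G^R_t\vee\G^X_{\infty}}}.
\]
Using $\{T_i>u\}\subseteq\{T_i>t\}$ for $u\geq t$, together with the conditional independence of the lifetimes exactly as in \eqref{eq:asp_cond_T_i}, each inner term reduces to $\I_{\{T_i>t\}}\espd{\I_{\{T_i>u\}} \mid \{T_i>t\}\cap\G^X_{\infty}}$, i.e.\ the conditioning $\sigma$-algebra collapses from $\G^R_t\vee\G^X_{\infty}$ to $\G^{R^i}_t\vee\G^X_{\infty}$ and then to $\{T_i>t\}\cap\G^X_{\infty}$.

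The third step is to recognize the survival process: by \eqref{eq:defn_surv_proc} the conditional probability above equals ${}_{u-t}p_t$ on $\{T_i>t\}$, and by \eqref{eq:surv_proc} we get $\I_{\{T_i>t\}}\,{}_{u-t}p_t = \I_{\{T_i>t\}}e^{-\int_t^u\lambda_a(r,X_r)\ud r}$. Since the exponential does not depend on $i$, summing over $i$ yields $\sum_{i=1}^{l_a}\I_{\{T_i>t\}}e^{-\int_t^u\lambda_a(r,X_r)\ud r} = (l_a - N_t)e^{-\int_t^u\lambda_a(r,X_r)\ud r}$. Substituting back and recalling that $\lambda_a(u,X_u)$ was factored out gives the inner conditional expectation $\espd{\Lambda_u \mid \G^R_t\vee\G^X_{\infty}} = (l_a - N_t)\lambda_a(u,X_u)e^{-\int_t^u\lambda_a(r,X_r)\ud r}$.

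Finally, conditioning on $\G^N_t$ and pulling out the $\G^N_t$-measurable factor $(l_a - N_t)$ produces the claimed formula
\[
B_t(u) = (l_a - N_t)\condespgnn{\lambda_a(u,X_u)e^{-\int_t^u\lambda_a(r,X_r)\ud r}}.
\]
I expect the only delicate points to be the passage from $u^-$ to $u$ and the reduction of the conditioning from $\G^R_t\vee\G^X_{\infty}$ to $\{T_i>t\}\cap\G^X_{\infty}$ on each summand; both are handled precisely as in the pure endowment lemma and the auxiliary computation \eqref{eq:asp_cond_T_i}, so no genuinely new difficulty arises.
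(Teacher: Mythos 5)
Your proposal is correct and follows essentially the same route as the paper's own proof: decompose $l_a-N_u$ into the indicators $\I_{\{T_i>u\}}$, condition on $\G^{R^i}_t\vee\G^X_\infty$, reduce via the analogue of \eqref{eq:asp_cond_T_i} to the survival process and apply \eqref{eq:surv_proc}, then sum and pull out $l_a-N_t$. Your explicit justification of replacing $N_{u^-},X_{u^-}$ by $N_u,X_u$ is a point the paper passes over silently, but otherwise the two arguments coincide.
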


\begin{proof}
Similarly to the pure endowment case, for every $t \in [0,T]$ we get:
\begin{align*}
B_t(u)&=\espd{ ( l_a - N_u) \lambda_a(u,X_u) | \G^N_t}  = \espd{\sum_{i=1}^{l_a}\I_{\{T_i>u\}}\lambda_a(u,X_u)|\G^N_t}\\
&=\espd{\sum_{i=1}^{l_a}\I_{\{T_i>t\}}\espd{\I_{\{T_i>u\}}\Big{|}\G^{R^i}_t\vee\G^X_\infty}\lambda_a(u,X_u)|\G^N_t}\\
&=\espd{\sum_{i=1}^{l_a} \I_{\{T_i>t\}}\lambda_a(u,X_u)e^{-\int_t^u\lambda_a(r,X_r)\ud r}|\G^N_t}\\
&=(l_a-N_t)\espd{\lambda_a(u,X_u)e^{-\int_t^u\lambda_a(r,X_r)\ud r}|\G^N_t}.
\end{align*}
\end{proof}

\subsection{Evaluation of $B(u)$ in a Markovian case}
By Lemma \ref{lemma:Bu} we get that
\begin{align*}
B_t(u)&=(l_a-N_t)\espd{\lambda_a(u,X_u)e^{-\int_t^u\lambda_a(r,X_r)\ud r}|\G^N_t}\\
&=(l_a-N_t)\espd{\lambda_a(u,X_u)\frac{Y_u}{Y_t}\Big{|}\G^N_t}\\
&=(l_a-N_t)\espd{\frac{1}{Y_t}\espd{\lambda_a(u,X_u)Y_u|\G_t}|\G^N_t},
\end{align*}
for every $0\leq t \leq u \leq T$, where $Y$ is the process defined in \eqref{eq:Y}.
We make Assumption \ref{ass:markov}. Then, thanks to the Markovianity of the pair $(X, Y)$, there is a measurable function $H(t,x,y)$ such that
\[
H(t,X_t, Y_t)=\condespg{\lambda_a(u,X_u)\frac{Y_u}{Y_t}}, \quad \forall \ t \in [0, u].
\]
Since $Y$ is $\bG$-adapted, there is a function $k(t,x,y)$ such that
 \begin{equation}\label{eq:k}
k(t,X_t,Y_t)=\espd{\lambda_a(u,X_u)Y_u|\G_t}, \quad \forall \ t \in [0,u]
 \end{equation}
 and $\ds H(t,X_t, Y_t)=Y_t^{-1} k(t,X_t,Y_t)$, for every $t \in [0,u]$.

Since the process $\{k(t,X_t,Y_t),\ t \in [0,T]\}$ given in \eqref{eq:k} is a $(\bG, \P_2)$-martingale, 
analogously to the pure endowment case, we can characterize the function $k(t,x,y)$ in terms of the solution of a suitable problem with final condition as the following result will show.
\begin{proposition}\label{lemma:caratterizzazionek}
Let $\widetilde{k}(t,x,y)\in D(\L^{X,Y})$ such that
\begin{equation}\label{eq:problemak}
\left\{
\begin{aligned}
\L^{X,Y} \widetilde{k}(t,x,y)&=0\\
\widetilde{k}(u,x,y)&=\lambda(u, x )y.
\end{aligned}
\right.
\end{equation}
Then $\widetilde{k}(t,X_t,Y_t)= k(t,X_t,Y_t)$, for every $t \in [0,u]$, with $k(t,x,y)$ such that \eqref{eq:k} is fulfilled.
\end{proposition}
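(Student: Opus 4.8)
The plan is to mirror verbatim the argument used for Proposition \ref{lemma:caratterizzazioneG}, since the two statements have identical structure and only the terminal datum differs, being $\lambda_a(u,x)y$ here in place of $y$ there. Throughout I would work on $[0,u]$ rather than $[0,T]$, which is harmless since $u\in[0,T]$ is fixed.

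First I would take $\widetilde{k}(t,x,y)\in D(\L^{X,Y})$ solving the final-value problem \eqref{eq:problemak} and invoke the martingale characterization of the generator recalled just after Assumption \ref{ass:markov}: because $\widetilde{k}$ belongs to the domain of $\L^{X,Y}$, the process
\[
\left\{\widetilde{k}(t,X_t,Y_t)-\int_0^t \L^{X,Y}\widetilde{k}(r,X_r,Y_r)\,\ud r,\ t\in[0,u]\right\}
\]
is a $(\bG,\P_2)$-martingale. The first line of \eqref{eq:problemak} forces $\L^{X,Y}\widetilde{k}\equiv 0$, so the drift term vanishes and $\{\widetilde{k}(t,X_t,Y_t),\ t\in[0,u]\}$ is itself a $(\bG,\P_2)$-martingale. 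Applying the martingale property at the terminal time $u$ and using the boundary condition $\widetilde{k}(u,x,y)=\lambda_a(u,x)y$, I would then obtain
\[
\widetilde{k}(t,X_t,Y_t)=\espd{\widetilde{k}(u,X_u,Y_u)|\G_t}=\espd{\lambda_a(u,X_u)\,Y_u|\G_t},\quad \forall\,t\in[0,u].
\]
Comparing the right-hand side with the defining relation \eqref{eq:k} for $k$, the two coincide, which yields $\widetilde{k}(t,X_t,Y_t)=k(t,X_t,Y_t)$ and closes the argument.

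As for the main obstacle, there is essentially no new difficulty relative to Proposition \ref{lemma:caratterizzazioneG}; the only point deserving care is integrability, namely that $\{\widetilde{k}(t,X_t,Y_t),\ t\in[0,u]\}$ is a genuine martingale rather than merely a local one, so that the conditioning step is legitimate and $\lambda_a(u,X_u)Y_u$ is $\P_2$-integrable. This is guaranteed by the membership $\widetilde{k}\in D(\L^{X,Y})$ together with the integrability of the intensity imposed in Assumption \ref{ass:cond_indep_lifetimes} through \eqref{eq:integrabilita_intensita} and the boundedness $0<Y_t\le 1$ coming from the dynamics \eqref{eq:Y}. I would also note the minor notational point that the terminal datum written $\lambda(u,x)y$ in \eqref{eq:problemak} is to be read as $\lambda_a(u,x)y$, consistently with \eqref{eq:k}.
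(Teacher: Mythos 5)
Your argument is exactly the one the paper intends: the printed proof is a (circular) self-reference, but it clearly means to repeat the proof of Proposition \ref{lemma:caratterizzazioneG}, i.e.\ use the martingale property of $\widetilde{k}(t,X_t,Y_t)-\int_0^t\L^{X,Y}\widetilde{k}(r,X_r,Y_r)\,\ud r$, kill the drift via $\L^{X,Y}\widetilde{k}=0$, and condition at the terminal time $u$ to match \eqref{eq:k}. Your additional remarks on integrability and on reading $\lambda(u,x)y$ as $\lambda_a(u,x)y$ are correct and harmless refinements of the same route.
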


\begin{proof}
The proof follows by that of Lemma \ref{lemma:caratterizzazionek}.
\end{proof}
Note that, due to the final condition of Problem \eqref{eq:problemak}, it is clear that the process $\{k(t, X_t, Y_t), \ t \in [0,u]\}$ depends on $u$.

In terms of the filter, the process $B(u)$ can be written as $B_t(u)=(l_a-N_t)\pi_t(H)$ for every $0\leq t \leq u \leq T$, where the process $\pi(H)$ depends on $u$.

Summarizing, the following result furnishes the optimal hedging strategy for a term structure contract in the Markovian case.

\begin{proposition}
Under Assumption \ref{ass:markov}, the $\widetilde{\bH}$-pseudo-optimal strategy $\psi^*=(\theta^*, \eta^*)$ is given by
\[
\theta^*_t= \int_t^T(l_a-N_{t^-})\pi_{t^-}(H) \beta_t(u) \ud u, \quad \forall \ t \in [0,T],
\]
and $\ds \eta^*_t=V_t(\psi^*)-\left(\int_t^T(l_a-N_{t^-})\pi_{t^-}(H) \beta_t(u) \ud u \right) S_t$,
where $\pi$ indicates the filter defined in \eqref{def:filtro}, $H(t,x,y)=y^{-1}k(t,x,y)$, $k(t,x,y)$ is the solution of problem \eqref{eq:problemak} and $V(\psi^*)$ is the optimal value process given by \eqref{eq:optimalvalueprocess} in Proposition \ref{prop:pseudo_opt_strategy}, with $B(u)=(l_a-N)\pi(H)$.
\end{proposition}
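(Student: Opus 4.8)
The plan is to derive the statement as an immediate corollary of the Proposition giving the general form of the term-insurance $\widetilde\bH$-pseudo-optimal strategy, combined with the Markovian representation of the process $B(u)$ developed just above. That earlier Proposition already supplies
\[
\theta_t^*= \int_t^T B_{t^-}(u)\beta_t(u) \ud u, \qquad \eta_t^*= V_t(\psi^*)-\theta^*_t S_t, \qquad t \in [0,T],
\]
so the entire content of the proof reduces to rewriting $B_{t^-}(u)$ under Assumption \ref{ass:markov} in terms of the filter.

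First I would recall from Lemma \ref{lemma:Bu} that $B_t(u)=(l_a-N_t)\espd{\lambda_a(u,X_u)e^{-\int_t^u\lambda_a(r,X_r)\ud r}\,|\,\G^N_t}$. Writing this with the process $Y$ of \eqref{eq:Y}, the conditional expectation equals $\espd{\frac{1}{Y_t}\espd{\lambda_a(u,X_u)Y_u|\G_t}\,|\,\G^N_t}$; by Markovianity of $(X,Y)$ the inner term is $k(t,X_t,Y_t)$ with $k$ as in \eqref{eq:k}, and putting $H(t,x,y)=y^{-1}k(t,x,y)$ one obtains $B_t(u)=(l_a-N_t)\espd{H(t,X_t,Y_t)|\G^N_t}=(l_a-N_t)\pi_t(H)$, with $\pi$ the filter of \eqref{def:filtro}. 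Proposition \ref{lemma:caratterizzazionek} then identifies $k$, hence $H$, as the solution of the terminal-value problem \eqref{eq:problemak}, closing the representation.

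Passing to the predictable left-limit version gives $B_{t^-}(u)=(l_a-N_{t^-})\pi_{t^-}(H)$; inserting this into the general expressions for $\theta^*$ and $\eta^*$ yields precisely the asserted formulas, while the optimal value process $V(\psi^*)$ is left unchanged since $B(u)=(l_a-N)\pi(H)$ is only a rewriting of the same process. Admissibility, $\theta^*\in\Theta(\widetilde\bH)$, need not be re-established: it was already proved for this very integrand in the earlier Proposition through the bound \eqref{eq:disug_Bu}, and the substitution merely re-expresses the identical process.

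The single point deserving care --- more a bookkeeping matter than a genuine obstacle --- is the dependence of $H$, and hence of $\pi(H)$, on the maturity $u$ through the terminal condition $\widetilde k(u,x,y)=\lambda(u,x)y$ in \eqref{eq:problemak}: inside the integral $\int_t^T$ the filter $\pi_{t^-}(H)$ must be read, for each fixed $u$, as evaluated at the function $H$ attached to that $u$. Once this is made explicit, the argument is purely substitutional and requires no new estimates.
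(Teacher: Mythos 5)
Your proposal is correct and follows exactly the route the paper intends: the paper states this proposition without a separate proof, treating it as an immediate substitution of the Markovian representation $B_{t^-}(u)=(l_a-N_{t^-})\pi_{t^-}(H)$ (obtained from Lemma \ref{lemma:Bu} and Proposition \ref{lemma:caratterizzazionek}) into the general term-insurance strategy formula, which is precisely what you do. Your explicit remark about the $u$-dependence of $H$ through the terminal condition mirrors the paper's own caveat and is the only point of care needed.
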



\begin{center}
{\bf Acknowledgements}
\end{center}
The authors have been supported by the Gruppo
Nazionale per l'Analisi Matematica, la Probabilit\`{a} e le loro
Applicazioni (GNAMPA) of the Istituto Nazionale di Alta Matematica
(INdAM).

\bibliographystyle{plain}
\bibliography{RM_biblio}

\appendix
\section{The survival process}\label{app:survival_process}

\medskip

\begin{lemma} \label{lemma:surv}
For every $i=1,\ldots,l_a$, and for every $s\geq 0$, over the event $\{T_i>s\}$, the survival process satisfies
\[
\I_{\{T_i>s\}}(\omega_2){}_tp_s(\omega_2):=\I_{\{T_i>s\}}(\omega_2)\P_2(T_i>s+t|\{T_i>s\}\cap \G^X_\infty)(\omega_2)=e^{-\int_s^{s+t}\lambda_a(u, X_u) \ud u}\I_{\{T_i>s\}}(\omega_2),
\]
for every $t >0$, $\omega_2\in \Omega_2$.
\end{lemma}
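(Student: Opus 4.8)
The plan is to reduce the identity to a Bayes-type formula for conditioning on an event intersected with a $\sigma$-algebra, and then to evaluate the resulting conditional survival probabilities by means of Assumption \ref{ass:cond_indep_lifetimes}. First I would record the single-lifetime marginal of \eqref{eq:intensita}: taking $t_i = r$ and $t_j = 0$ for all $j \neq i$, and using that each $T_j > 0$ $\P_2$-a.s. (so that the corresponding factors equal $1$), one obtains
\[
\P_2\big(T_i > r \,\big|\, \G^X_\infty\big) = \exp\Big(-\int_0^r \lambda_a(u, X_u)\, \ud u\Big), \qquad \forall\, r \geq 0.
\]
By the integrability condition \eqref{eq:integrabilita_intensita} this integral is $\P_2$-a.s. finite, so the conditional probability is strictly positive $\P_2$-a.s.

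Next I would make precise the conditional probability appearing in \eqref{eq:defn_surv_proc}: conditioning on $\{T_i > s\} \cap \G^X_\infty$ means conditioning on the $\sigma$-algebra $\sigma(\{T_i > s\}) \vee \G^X_\infty$ and restricting to the event $\{T_i > s\}$. A standard property of conditional expectations then gives, on $\{T_i > s\}$ and using the positivity of $\P_2(T_i > s \mid \G^X_\infty)$,
\[
\I_{\{T_i > s\}}\, \P_2\big(T_i > s+t \,\big|\, \{T_i > s\} \cap \G^X_\infty\big) = \I_{\{T_i > s\}}\, \frac{\P_2\big(\{T_i > s+t\} \cap \{T_i > s\} \,\big|\, \G^X_\infty\big)}{\P_2\big(T_i > s \,\big|\, \G^X_\infty\big)}.
\]
Since $t > 0$ forces $\{T_i > s+t\} \subseteq \{T_i > s\}$, the numerator reduces to $\P_2(T_i > s+t \mid \G^X_\infty)$.

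Then I would insert the marginal formula of the first step into both numerator and denominator and divide:
\[
\frac{\P_2(T_i > s+t \mid \G^X_\infty)}{\P_2(T_i > s \mid \G^X_\infty)} = \frac{\exp\big(-\int_0^{s+t}\lambda_a(u, X_u)\, \ud u\big)}{\exp\big(-\int_0^{s}\lambda_a(u, X_u)\, \ud u\big)} = \exp\Big(-\int_s^{s+t}\lambda_a(u, X_u)\, \ud u\Big),
\]
which, multiplied by $\I_{\{T_i > s\}}$, is exactly the claimed identity.

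The main obstacle is the rigorous justification of the Bayes-type identity in the second step: one must check that the right-hand side satisfies the defining property of the conditional probability given $\sigma(\{T_i > s\}) \vee \G^X_\infty$ on the event $\{T_i > s\}$, i.e. that it integrates correctly against sets of the form $\{T_i > s\} \cap G$ with $G \in \G^X_\infty$. Here the positivity of $\P_2(T_i > s \mid \G^X_\infty)$ guaranteed by \eqref{eq:integrabilita_intensita} is essential to make the quotient well defined; everything else is routine.
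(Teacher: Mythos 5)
Your proof is correct and rests on the same two ingredients as the paper's: the single-lifetime marginal $\P_2(T_i>r\mid\G^X_\infty)=e^{-\int_0^r\lambda_a(u,X_u)\,\ud u}$ extracted from \eqref{eq:intensita} by setting the other $t_j$ to zero, and the reading of conditioning on $\{T_i>s\}\cap\G^X_\infty$ as conditioning on the trace $\sigma$-algebra over the event $\{T_i>s\}$. The difference is one of packaging. The paper never forms a quotient: it verifies the defining identity
\[
\espd{\I_{A\cap\{T_i>s\}}\,{}_tp_s}=\espd{\I_{A\cap\{T_i>s\}}\,e^{-\int_s^{s+t}\lambda_a(u,X_u)\,\ud u}},\qquad A\in\G^X_\infty,
\]
directly, computing both sides to the common value $\espd{\I_A\, e^{-\int_0^{s+t}\lambda_a(u,X_u)\,\ud u}}$ by the tower property; multiplication of exponentials replaces division, so positivity of a denominator is never invoked. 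You instead route through the Bayes-type quotient formula for $\sigma(\{T_i>s\})\vee\G^X_\infty$, which requires $\P_2(T_i>s\mid\G^X_\infty)>0$ $\P_2$-a.s.\ --- true here because $\int_0^s\lambda_a(u,X_u)\,\ud u<\infty$ a.s.\ by \eqref{eq:integrabilita_intensita}, as you note. The verification of that Bayes formula against the generating sets $\{T_i>s\}\cap G$, which you correctly single out as the one step still to be written, is exactly the computation the paper performs, so the two arguments coincide once your outline is completed; your version has the advantage of quoting a reusable general lemma, the paper's of being self-contained and division-free.
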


\begin{proof}
Note that for every $A\in \G^X_\infty$, $A\cap \{T_i> s\} \in \{T_i>s\} \cap \G^X_\infty$, for every $s\geq 0$. We will show that for every $A\in \G^X_\infty$, every $s\geq 0$ and every $t \geq 0$,
\begin{equation}\label{eq:proba_cond}
\espd{\I_{A\cap\{T_i>s\}} \P_2(T_i > s+t \ | \ \{T_i>s\} \cap \G^X_\infty)}=\espd{\I_{A\cap\{T_i>s\}}e^{-\int_s^{s+t}\lambda_a(u, X_u) \ud u}}.
\end{equation}
Indeed,
\begin{align*}
& \espd{\I_{A\cap\{T_i>s\}} \P_2(T_i > s+t \ | \ \{T_i>s\} \cap \G^X_\infty)}\\
& \quad =\espd{\I_{A\cap\{T_i>s\}}\espd{\I_{\{T_i>s+t\}}|\ \{T_i>s\} \cap \G^X_\infty}}\\
& \quad =\espd{\I_A \I_{\{T_i>s\}}\I_{\{T_i>s+t\}}}= \espd{\I_A\I_{\{T_i>s+t\}}}\\
& \quad =\espd{\I_A \espd{\I_{\{T_i>s+t\}}|\G^X_\infty}}= \espd{\I_A \P_2(T_i>s+t|\G^X_\infty)}\\
& \quad = \espd{\I_A e^{-\int_0^{s+t}\lambda_a(u,X_u)\ud u}}.
\end{align*}
 On the other hand,
\begin{align*}
\espd{\I_{A\cap\{T_i>s\}}e^{-\int_s^{s+t}\lambda_a(u, X_u) \ud u}}&=\espd{\I_{A}\I_{\{T_i>s\}}e^{-\int_s^{s+t}\lambda_a(u, X_u) \ud u}}\\
&=\espd{\I_{A}e^{-\int_s^{s+t}\lambda_a(u, X_u) \ud u} \espd{\I_{\{T_i>s\}}|\G^X_\infty}}\\
&=\espd{\I_{A}e^{-\int_s^{s+t}\lambda_a(u, X_u) \ud u}e^{-\int_0^s\lambda_a(u, X_u) \ud u}}\\
&=\espd{\I_A e^{-\int_0^{s+t}\lambda_a(u,X_u)\ud u}}
\end{align*}
that proves \eqref{eq:proba_cond}. Therefore, we get the result.
\end{proof}

\section{Technical results}\label{app:aspettazione_condizionata}

\medskip

\begin{lemma}
For every $t \in [0,T]$ and for every $i=1,...,l_a$ the following equality holds
\begin{equation}\label{eq:asp_cond_T_i}
\I_{\{T_i>t\}}\espd{\I_{\{T_i>T\}}|\G^{R^i}_t\vee\G^X_{\infty}}=\I_{\{T_i>t\}}\espd{\I_{\{T_i>T\}}|\{T_i>t\}\cap\G^X_{\infty}}.
\end{equation}
\end{lemma}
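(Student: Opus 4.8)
The plan is to fix $i$ and $t$, abbreviate $\G:=\G^{R^i}_t\vee\G^X_\infty$ and $\Pi:=\espd{\I_{\{T_i>T\}}|\{T_i>t\}\cap\G^X_\infty}$, and show that $\I_{\{T_i>t\}}\Pi$ is a version of $\I_{\{T_i>t\}}\espd{\I_{\{T_i>T\}}|\G}$; since both members of \eqref{eq:asp_cond_T_i} vanish off $\{T_i>t\}$, this is exactly the assertion. Because a conditional expectation is characterized by its measurability together with its integrals against the conditioning $\sigma$-algebra, I would establish two things: that $\I_{\{T_i>t\}}\Pi$ is $\G$-measurable, and that $\espd{\I_C\,\I_{\{T_i>t\}}\Pi}=\espd{\I_C\,\I_{\{T_i>t\}}\I_{\{T_i>T\}}}$ for every $C$ in a $\pi$-system generating $\G$. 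Measurability is immediate from Lemma~\ref{lemma:surv}: on $\{T_i>t\}$ one has $\Pi={}_{T-t}p_t=e^{-\int_t^T\lambda_a(u,X_u)\ud u}$, so $\I_{\{T_i>t\}}\Pi$ is the product of the $\G^{R^i}_t$-measurable indicator $\I_{\{T_i>t\}}$ and a $\G^X_\infty$-measurable factor, hence $\G$-measurable.

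The key structural ingredient---and the only nontrivial point---is that $\{T_i>t\}$ is an atom of $\G^{R^i}_t$, i.e. every $B\in\G^{R^i}_t$ satisfies $B\cap\{T_i>t\}\in\{\emptyset,\{T_i>t\}\}$. This is the \emph{no information before the jump} property of the natural filtration of the one-jump process $R^i$: on $\{T_i>t\}$ all observations $R^i_s=\I_{\{T_i\le s\}}$, $s\le t$, vanish identically, so no generator can separate points of $\{T_i>t\}$. I would prove it by checking that the class $\mathcal D:=\{B\in\G^{R^i}_t:\,B\cap\{T_i>t\}\in\{\emptyset,\{T_i>t\}\}\}$ is a $\sigma$-algebra (closure under complements and countable unions is straightforward) that contains every generator $\{T_i\le s\}=\{R^i_s=1\}$ for $s\le t$, since $\{T_i\le s\}\cap\{T_i>t\}=\emptyset$; hence $\mathcal D=\G^{R^i}_t$. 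This structural fact is where I expect the whole content of the lemma to sit.

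With this in hand I would finish as follows. The family $\{B\cap A:\,B\in\G^{R^i}_t,\ A\in\G^X_\infty\}$ is a $\pi$-system generating $\G$, and by the atom property $\I_{\{T_i>t\}}\I_{B\cap A}$ equals either $0$ or $\I_{A\cap\{T_i>t\}}$; thus it suffices to test against $C=A\cap\{T_i>t\}$ with $A\in\G^X_\infty$. For such $C$, using $\{T_i>T\}\subseteq\{T_i>t\}$ together with the defining property of the conditional expectation $\Pi$ on the trace $\sigma$-algebra $\{T_i>t\}\cap\G^X_\infty$ (the first computation in the proof of Lemma~\ref{lemma:surv}), I get $\espd{\I_{A\cap\{T_i>t\}}\Pi}=\espd{\I_{A\cap\{T_i>t\}}\I_{\{T_i>T\}}}=\espd{\I_A\I_{\{T_i>T\}}}$, while on the other side the defining property of $\espd{\cdot|\G}$ gives $\espd{\I_{A\cap\{T_i>t\}}\espd{\I_{\{T_i>T\}}|\G}}=\espd{\I_{A\cap\{T_i>t\}}\I_{\{T_i>T\}}}=\espd{\I_A\I_{\{T_i>T\}}}$. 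The two integrals coincide on the generating $\pi$-system (and on $\Omega$), so by uniqueness the two $\G$-measurable random variables agree $\P_2$-a.s., which is \eqref{eq:asp_cond_T_i}. The completeness of the filtrations under the usual conditions takes care of the null-set bookkeeping, so the main obstacle remains entirely concentrated in the atom property, the rest being routine.
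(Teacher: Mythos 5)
Your proof is correct and follows essentially the same route as the paper's: both verify the defining integral identity of the conditional expectation on a generating class of $\G^{R^i}_t\vee\G^X_{\infty}$ and use the explicit form of the survival process to reduce both sides to $\espd{\I_A\I_{\{T_i>T\}}}$. Your explicit treatment of the $\G^{R^i}_t\vee\G^X_{\infty}$-measurability of the right-hand side and of the atom property of $\{T_i>t\}$ in $\G^{R^i}_t$ makes rigorous two points the paper leaves implicit when it restricts the verification to the events $B_1=A\cap\{T_i>s\}$ and $B_2=A\cap\{T_i\le s\}$.
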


\begin{proof}
First we observe that $\G^{R^i}_t\vee\G^X_{\infty}\supseteq \{T_i>t\}\cap\G^X_{\infty}$. Then to prove the equality it is sufficient to show that
\[
\espd{\I_{B}\I_{\{T_i>t\}}\espd{\I_{\{T_i>T\}}|\G^{R^i}_t\vee\G^X_{\infty}}}=\espd{\I_B\I_{\{T_i>t\}}\espd{\I_{\{T_i>T\}}|\{T_i>t\}\cap\G^X_{\infty}}}
\]
for every $t \in [0,T]$ and $B\in \G^{R^i}_t\vee\G^X_{\infty}$. In particular we can only show that the equality holds for the events $B_1=A\cap \{T_i>s\}$ and $B_2=A\cap\{T_i\leq s\}$, for every $s\leq t$, $A\in \G^X_{\infty}$. We get:
\begin{align*}
\espd{\I_{B_1}\I_{\{T_i>t\}}\espd{\I_{\{T_i>T\}}|\G^{B^i}_t\vee\G^X_{\infty}}}&=\espd{\I_{A}\I_{\{T_i>s\}}\I_{\{T_i>t\}}\espd{\I_{\{T_i>T\}}|\G^{R^i}_t\vee\G^X_{\infty}}}\\
&=\espd{\espd{\I_{A}\I_{\{T_i>s\}}\I_{\{T_i>t\}}\I_{\{T_i>T\}}|\G^{R^i}_t\vee\G^X_{\infty}}}\\
&=\espd{\I_A\I_{\{T_i>T\}}}.
\end{align*}
On the other hand,
\begin{align*}
\espd{\I_{B_1}\I_{\{T_i>t\}}\espd{\I_{\{T_i>T\}}|\{T_i>t\}\cap\G^X_{\infty}}}&=\espd{\I_A\I_{\{{T_i>s}\}}\I_{\{T_i>t\}}\espd{\I_{\{T_i>T\}}|\{T_i>t\}\cap\G^X_{\infty}}}\\
&=\espd{\I_A\I_{\{T_i>t\}}\espd{\I_{\{T_i>T\}}|\{T_i>t\}\cap\G^X_{\infty}}}\\
&=\espd{\espd{\I_A\I_{\{T_i>t\}}\I_{\{T_i>T\}}|\{T_i>t\}\cap\G^X_{\infty}}}\\
&=\espd{\I_A\I_{\{T_i>T\}}},
\end{align*}
which proves the statement for the event $B_1$.
Finally for the event $B_2=A\cap\{T_i\leq s\}$, since $s< t$, we get $\I_{\{T_1\leq s\}}\I_{\{T_i>t\}}=0$, and this concludes the proof.
\end{proof}

\medskip

\begin{lemma}\label{lemma:mgK}
Let $K:=\{K_t, \ t \in [0,T]\}$ be the process defined in \eqref{eq:mgK}, i.e.
\begin{align*}
K_t=\int_0^tB_{s^-} \ud A_s +\int_0^t U_{s^-}\Gamma_{s}\left(\ud N_s-\pi_{s^-}(\Lambda) \ud s\right), \quad \forall \ t \in [0,T],
\end{align*}
where $B$ is given by \eqref{eq:defn_B} and $\Gamma$ is defined in \eqref{eq:B}.
Then $K$ is a square-integrable $(\widetilde{\bH}, \Q)$-martingale.
\end{lemma}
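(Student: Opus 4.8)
The plan is to write $K$ as the sum of two stochastic integrals,
\[
K_t=\int_0^tB_{s^-} \ud A_s +\int_0^t U_{s^-}\Gamma_{s}\ud N^*_s,
\]
where $N^*_t:=N_t-\int_0^t\pi_{r^-}(\Lambda)\ud r$, and to show separately that each summand is a square-integrable $(\widetilde{\bH},\Q)$-martingale; since the sum of two such martingales is again one, this gives the claim. The common mechanism is that $A$ and $N^*$, which live on the financial and insurance factors respectively, are both square-integrable $(\widetilde{\bH},\Q)$-martingales: $A$ is a square-integrable $(\bF,\P_1)$-martingale and $N^*$ a $(\bG^N,\P_2)$-martingale, so by the product-martingale fact coming from Assumption \ref{ass:indipendenza_assicurazioni} and Remark \ref{rem:variabili_aleatorie} each is an $(\bH,\Q)$- and hence an $(\widetilde{\bH},\Q)$-martingale. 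The relevant predictable brackets are unaffected by the enlargement: $\langle A\rangle$ computed in $\bF$ under $\P_1$ coincides with the $(\widetilde{\bH},\Q)$-bracket, and the $(\widetilde{\bH},\Q)$-compensator of $N$ remains $\int_0^\cdot\pi_{r^-}(\Lambda)\ud r$ because the financial filtration carries no information on $N$. Finally both integrands are $\widetilde{\bH}$-predictable: $B_{s^-}$ and $\Gamma_s$ are $\bG^N$-predictable, while $U_{s^-}$ is $\bF$-predictable.

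For the first integral I would use $0\le B_{s^-}\le l_a$ together with the It\^o isometry to estimate
\[
\espq{\left(\int_0^T B_{s^-}\ud A_s\right)^2}=\espq{\int_0^T B_{s^-}^2\ud\langle A\rangle_s}\le l_a^2\,\espq{\langle A\rangle_T}=l_a^2\,\espu{A_T^2}<\infty,
\]
the last equality by $A_0=0$ and the square-integrability of $A$. This upgrades $\int_0^\cdot B_{s^-}\ud A_s$ from a local martingale to a genuine square-integrable $(\widetilde{\bH},\Q)$-martingale.

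For the second integral, since the $(\widetilde{\bH},\Q)$-angle bracket of $N^*$ is $\int_0^\cdot\pi_{r^-}(\Lambda)\ud r$, the It\^o isometry gives
\[
\espq{\left(\int_0^T U_{s^-}\Gamma_s\ud N^*_s\right)^2}=\espq{\int_0^T U_{s^-}^2\Gamma_s^2\,\pi_{s^-}(\Lambda)\ud s}.
\]
Here the crucial observation is that $U_{s^-}^2$ depends only on $\omega_1$ while $\Gamma_s^2\pi_{s^-}(\Lambda)$ depends only on $\omega_2$; by Tonelli and the factorization $\Q=\P_1\times\P_2$ the right-hand side equals $\int_0^T\espu{U_{s^-}^2}\,\espd{\Gamma_s^2\pi_{s^-}(\Lambda)}\ud s$, which is bounded by $\big(\sup_{s\le T}\espu{U_s^2}\big)\,\espd{\int_0^T\Gamma_s^2\pi_s(\Lambda)\ud s}$. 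The second factor is finite by the martingale-representation choice of $\Gamma$ in \eqref{eq:B}, and the first is finite because the representation $U=U_0+\int\beta\,\ud S+A$ of \eqref{eq:U}, with $U_0\in L^2(\F_0,\P_1)$, $\beta\in\Theta(\bF)$ (so that $\int\beta\,\ud S$ is $\P_1$-square-integrable) and $A$ a square-integrable $(\bF,\P_1)$-martingale, yields $\sup_{s\le T}\espu{U_s^2}<\infty$.

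I expect the main obstacle to be precisely this last estimate, where one must avoid confusing the measures: $U$ is a $\P^*$-martingale, but the expectation in the isometry is taken under $\P_1$, and $\xi\in L^2(\P_1)$ need not belong to $L^2(\P^*)$. The clean route is therefore to read off the $\P_1$-semimartingale decomposition $U=U_0+\int\beta\,\ud S+A$ and bound each of its three terms in $L^2(\P_1)$ uniformly on $[0,T]$, rather than attempting to control $U$ through $\xi$ under $\P^*$. Once both summands are recognized as square-integrable $(\widetilde{\bH},\Q)$-martingales, $K$ is one as well, which is the assertion.
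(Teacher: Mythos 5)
Your proof is correct, and it takes a genuinely different route from the paper's on the key estimates. The paper establishes the martingale property of $\int_0^\cdot U_{s^-}\Gamma_s\,\ud N^*_s$ only through an $L^1$ bound on the compensator, namely $\espq{\int_0^T|U_{s^-}\Gamma_s|\,\pi_{s^-}(\Lambda)\,\ud s}\leq \espu{\sup_{0\leq t\leq T}|U_t|}\,\espd{\int_0^T|\Gamma_s|\,\pi_{s^-}(\Lambda)\,\ud s}$, with $\espu{\sup_t|U_t|}<\infty$ obtained from the decomposition of $U$ and the Burkholder--Davis--Gundy inequality; it then gets square-integrability of $K$ indirectly, by reading $K_T=G_T-G_0-\int_0^TB_{t^-}\beta_t\,\ud S_t$ off the F\"ollmer--Schweizer identity established in Proposition \ref{prop:pseudo_opt_strategy} and using that $K^2$ is a submartingale. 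You instead prove square-integrability of each summand directly via the It\^o isometry: $\espq{\left(\int_0^TB_{s^-}\ud A_s\right)^2}\leq l_a^2\,\espu{A_T^2}$, and, after factorizing the product expectation over $\Omega_1\times\Omega_2$, $\espq{\left(\int_0^TU_{s^-}\Gamma_s\,\ud N^*_s\right)^2}\leq \left(\sup_{s\leq T}\espu{U_s^2}\right)\espd{\int_0^T\Gamma_s^2\,\pi_s(\Lambda)\,\ud s}$, the last factor being finite by the choice of $\Gamma$ in \eqref{eq:B} and the first by bounding the three terms of $U=U_0+\int\beta\,\ud S+A$ in $L^2(\P_1)$. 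This is self-contained (it does not lean on the decomposition of $G_T$, which in the paper is being assembled in the very proposition that invokes this lemma) and yields the slightly stronger conclusion that each summand is separately a square-integrable martingale; the paper's route avoids the isometry for the jump martingale but pays for it by importing the identity for $K_T$. The two points you flag as delicate --- keeping the measures $\P_1$ and $\P^*$ apart when estimating $U$, and the invariance of $\langle A\rangle$ and of the compensator $\int_0^\cdot\pi_{r^-}(\Lambda)\,\ud r$ under the enlargement to $(\widetilde\bH,\Q)$ --- are exactly the points that need checking, and your treatment of them via the product-martingale argument of Assumption \ref{ass:indipendenza_assicurazioni} is sound.
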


\begin{proof}

By the boundedness of $B$, we get that the process $\ds \left\{\int_0^t B_{s^-} \ud A_s, \ t \in [0,T]\right\}$ turns out to be an $(\widetilde \bH, \Q)$-martingale strongly orthogonal to $S$. Moreover, we also get that the process  $\ds \left\{\int_0^t U_{s^-}\Gamma_{s}\left(\ud N_s- \pi_{s^-}(\Lambda) \ud s\right), t \in [0,T]\right\}$ is  an $(\widetilde \bH, \Q)$-martingale strongly orthogonal to $S$. Indeed,
\begin{align}\label{eq:stime1}
\espq{ \int_0^T |U_{s^-}\Gamma_{s}| \pi_{s^-}(\Lambda) \ud s }\leq\espu{\sup_{0\leq t\leq T}|U_{t}|}\espd{ \int_0^T |\Gamma_{s}| \pi_{s^-}(\Lambda) \ud s }.
\end{align}
By \eqref{eq:U} and the structure condition \eqref{eq:SC},  $U_t=U_0+\int_0^t\beta_u \ud M_u+\int_0^t\beta_u \alpha_u \ud \langle M \rangle_u+ A_t$, for every $t \in [0,T]$, which means that
\begin{align}
\sup_{0\leq t \leq T}|U_t|\leq |U_0|+\sup_{0\leq t \leq T}|\int_0^t\beta_u \ud M_u|+\sup_{0\leq t \leq T}\int_0^t|\beta_u \alpha_u| \ud \langle M \rangle_u+ \sup_{0\leq t \leq T}|A_t|, \quad \forall \ t \in [0,T].
\end{align}
Therefore thanks to the Burkholder-Davis-Gundy inequality there exists a constant $c>0$ such that
\begin{align*}
\espq{\sup_{0\leq t\leq T}|U_{t}|}=&\espu{\sup_{0\leq t\leq T}|U_{t}|}\leq |U_0|+ c\espu{\left(\int_0^T\beta_u^2\ud \langle M\rangle_u\right)^{\frac{1}{2}}}\\
&+\frac{1}{2}\espu{\int_0^T\beta_u^2\ud \langle M\rangle_u}+\frac{1}{2}\espu{\int_0^T\alpha_u^2\ud \langle M\rangle_u}+c\espu{\left(\langle A\rangle_T\right)^{\frac{1}{2}}}<\infty
\end{align*}
Then the expectations in \eqref{eq:stime1} are finite.

Finally, we observe that $K$ is square-integrable. In fact, since $K$ is an $(\widetilde{\bH}, \Q)$-martingale, then $K^2$ is an $(\widetilde{\bH}, \Q)$-submartingale and for every $t \in [0,T]$
\[
\espq{K_t^2}\leq \espq{K_T^2}\leq3\left(\espq{G_T^2}+ \espq{\left(\int_0^TB_{t^-}\beta_t \ud S_t\right)^2}+\espq{G_0^2}\right)<\infty
\]
since $G_T\in L^2(\widetilde{\H}_T, \Q)$, $G_0\in L^2(\widetilde{\H}_0, \Q)$ and $\{B_{t^-}\beta_t, \ t \in [0,T]\}\in \Theta(\widetilde\bH)$.
\end{proof}

\medskip

\begin{lemma}\label{lemma:mgK1}
Let $K^1:=\{K^1_t, t \in [0,T]\}$ be the process defined in \eqref{eq:mgK1}, i.e.
\begin{align*}
K^1_t= \int_0^t\left\{g(r, S_{r^-})+\int_r^T  V(u,r^-)\Gamma_r(u)\ud u\right\}\left(\ud N_r-\pi_{r^-}(\Lambda)\ud r\right)+\int_0^t \int_r^T B_{r^-}(u)\gamma_r(u) \ud u \ \ud A^1_r,
\end{align*}
for every $ t \in [0,T]$, where $B(u)$ is given by \eqref{eq:Bu}  and $\Gamma(u)$ is defined in \eqref{eq:rappBu}.
Then $K^1$ is a square-integrable $(\widetilde{\bH}, \Q)$-martingale.
\end{lemma}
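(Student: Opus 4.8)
The plan is to mirror the argument of Lemma \ref{lemma:mgK} for the pure endowment case, splitting $K^1$ into two stochastic integrals and treating each separately. Write $K^1=K^{1,N}+K^{1,A}$ with
\[
K^{1,N}_t=\int_0^t\Xi_r\left(\ud N_r-\pi_{r^-}(\Lambda)\ud r\right),\qquad \Xi_r:=g(r,S_{r^-})+\int_r^T V(u,r^-)\Gamma_r(u)\ud u,
\]
and $K^{1,A}_t=\int_0^t\left(\int_r^T B_{r^-}(u)\gamma_r(u)\ud u\right)\ud A^1_r$. The first integral is driven by the compensated counting martingale $N^*$, which is a $(\bG^N,\P_2)$-martingale and hence an $(\widetilde\bH,\Q)$-martingale, while the second is driven by the $(\bF,\P_1)$-martingale $A^1$, strongly orthogonal to $M$; by Assumption \ref{ass:indipendenza_assicurazioni} both lift to $(\widetilde\bH,\Q)$-martingales strongly orthogonal to $S$, once the relevant integrability is checked.

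For $K^{1,N}$ the main step is to verify $\espq{\int_0^T|\Xi_r|\pi_{r^-}(\Lambda)\ud r}<\infty$, which secures a genuine (not merely local) martingale. I split $\Xi_r$ into its two summands. For the $g$ term I bound $\espq{\int_0^T|g(r,S_{r^-})|\pi_{r^-}(\Lambda)\ud r}$ by $\espu{\sup_{t\in[0,T]}|g(t,S_t)|}\,\espd{\int_0^T\pi_{r^-}(\Lambda)\ud r}$ using the product structure of $\Q$, both factors being finite by \eqref{eq:quadrato_integr_sup_g} (via Cauchy--Schwarz) and \eqref{eq:integrab_intensita2}. For the double-integral term I apply Fubini to write $\espq{\int_0^T\int_r^T|V(u,r^-)\Gamma_r(u)|\ud u\,\pi_{r^-}(\Lambda)\ud r}=\int_0^T\espq{\int_0^u|V(u,r^-)\Gamma_r(u)|\pi_{r^-}(\Lambda)\ud r}\ud u$ and, exploiting independence, factorize each inner expectation into a financial part controlled by $\espu{\sup_{t\le u}|V(u,t)|}$ and an insurance part controlled by $\espd{\int_0^u|\Gamma_r(u)|\pi_{r^-}(\Lambda)\ud r}$. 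The supremum is finite uniformly in $u$ through the representation \eqref{eq:FS_gu_cond}, the structure condition \eqref{eq:SC}, the Burkholder--Davis--Gundy inequality, $\beta(u)\in\Theta(\bF)$ and \eqref{eq:int_cond}, exactly as $\espq{\sup_t|U_t|}$ was handled in Lemma \ref{lemma:mgK}; the insurance factor is finite by Cauchy--Schwarz together with the $L^2$-bound $\espd{\int_0^u\Gamma_r(u)^2\pi_r(\Lambda)\ud r}<\infty$ from \eqref{eq:rappBu} and \eqref{eq:integrab_intensita2}.

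For $K^{1,A}$ I establish the stronger $L^2$-bound $\espq{\int_0^T\left(\int_r^T B_{r^-}(u)\gamma_r(u)\ud u\right)^2\ud\langle A^1\rangle_r}<\infty$. Applying Fubini and the independence of the two markets as in the admissibility estimate of the preceding Proposition, this reduces to controlling, for each $u$, the product of $\espd{\sup_{0\le r\le u}B_r^2(u)}$ and $\espu{\int_0^u\gamma_r^2(u)\ud\langle A^1\rangle_r}$, integrated in $u$ over the finite interval $[0,T]$. The first factor is finite by the estimate \eqref{eq:disug_Bu} (BDG applied to the $(\bG^N,\P_2)$-martingale $B(u)$), and the second by the standing assumption $\espu{\int_0^T\gamma_t^2(u)\ud\langle A^1\rangle_t}<\infty$; note that here, unlike the pure endowment case, $B(u)$ is not bounded, so \eqref{eq:disug_Bu} genuinely replaces the boundedness of $B$. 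Hence $K^{1,A}$ is a square-integrable $(\widetilde\bH,\Q)$-martingale strongly orthogonal to $S$.

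Having shown that $K^1$ is an $(\widetilde\bH,\Q)$-martingale, square-integrability follows as in Lemma \ref{lemma:mgK}: since $(K^1)^2$ is then a submartingale, for every $t$ one has $\espq{(K^1_t)^2}\le\espq{(K^1_T)^2}\le 3\big(\espq{G_T^2}+\espq{(\int_0^T\theta^*_r\ud S_r)^2}+\espq{G_0^2}\big)$, and each term on the right is finite because $G_T\in L^2(\widetilde\H_T,\Q)$, $G_0\in L^2(\widetilde\H_0,\Q)$ and $\theta^*\in\Theta(\widetilde\bH)$ as proved in the Proposition. The main obstacle I anticipate is the bookkeeping for the double integrals: each integrand carries an inner $\int_r^T\cdots\ud u$, so Fubini must be combined with estimates uniform in $u$, and one must carefully separate which factors live on the financial space and which on the insurance space in order to factorize the expectations under $\Q=\P_1\times\P_2$; once this is organized, the remaining bounds are the same ones already available from Section \ref{sec:financial_market}, the martingale representation \eqref{eq:rappBu}, and the estimate \eqref{eq:disug_Bu}.
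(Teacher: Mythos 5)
Your proposal is correct and follows essentially the same route as the paper's proof: the same splitting of $K^1$ into the compensated-$N$ integral and the $A^1$ integral, the same Fubini plus independence-factorization arguments, the bound $\espd{\sup_{0\leq r\leq u}B_r^2(u)}<\infty$ from \eqref{eq:disug_Bu}, the Burkholder--Davis--Gundy control of $\sup_{0\le r\le u}|V(u,r)|$ via \eqref{eq:FS_gu_cond}, and the same submartingale argument for square-integrability. The only (immaterial) deviations are that you estimate the cross term $V(u,r^-)\Gamma_r(u)$ by factorizing the expectation and applying Cauchy--Schwarz to the $\Gamma$ factor, where the paper uses $|V\Gamma|\le\tfrac12(V^2+\Gamma^2)$ before splitting, and that you work with an $L^1$ bound on $g$ where the paper records the $L^2$ bound directly.
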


\begin{proof}
Firstly we note that the process $\ds\left\{\int_0^t g(r, S_{r^-}) \left(\ud N_r - \pi_{r^-}(\Lambda) \ud r\right), \ t \in [0,T]\right\}$ is a square-integrable $(\widetilde{\bH}, \Q)$-martingale. Indeed
\[
\espq{\int_0^T g^2(r, S_r) \pi_r(\Lambda) \ud r}\leq\espu{\sup_{t\in [0,T]}g^2(t, S_t)}\espd{\int_0^T \Lambda_r \ud r}\leq l_a \espu{\sup_{t\in [0,T]}g^2(t, S_t)}<\infty.
\]

Second, since $\{V(u,t^-), \ t \in [0,u]\}$ is $\bF$-predictable and $\Gamma(u):=\{\Gamma_t(u), \ t \in [0,u]\}$ is $\bG^N$-predictable, then $\ds \left\{\int_t^T  V(u,t^-)\Gamma_t(u)\ud u, \ t \in [0,T]\right\}$ is $\widetilde{\bH}$-predictable. Moreover,
\begin{align*}
&\espq{\int_0^T\left|\int_r^T  V(u,r)\Gamma_r(u)\ud u\right|\pi_{r}(\Lambda)\ud r}\\
&\quad \quad\leq \espq{\int_0^T\left\{\int_r^T  |V(u,r)||\Gamma_r(u)|\ud u\right\}\pi_{r}(\Lambda)\ud r}\\
&\quad \quad\leq \frac{1}{2} \espq{\int_0^T \left\{\int_r^T V^2(u,r)\ud u + \int_r^T \Gamma_r^2(u)\ud u\right\}\pi_{r}(\Lambda)\ud r}\\
&\quad \quad\leq \frac{1}{2} \espq{\int_0^T \left\{\int_0^u V^2(u,r)\pi_{r}(\Lambda)\ud r\right\}\ud u}+\frac{1}{2} \espq{\int_0^T  \left\{\int_0^u \Gamma_r^2(u)\pi_{r}(\Lambda)\ud r \right\}\ud u},
\end{align*}
where the last equality follows by Fubini's Theorem. Then we get that
\begin{align*}
&\espq{\int_0^T \left\{\int_0^u V^2(u,r)\pi_{r}(\Lambda)\ud r\right\}\ud u}\leq \espq{\int_0^T \left\{\sup_{0\leq r\leq u}V^2(u,r)\int_0^u \pi_{r}(\Lambda)\ud r\right\}\ud u}\\
&\quad \quad=\int_0^T\espu{\sup_{0\leq r\leq u}V^2(u,r)}\espd{\int_0^u \pi_{r}(\Lambda)\ud r}\ud u\\
&\quad \quad\leq l_a \int_0^T\espu{\sup_{0\leq r\leq u}V^2(u,r)}\ud u <\infty,
\end{align*}
thanks to \eqref{eq:integrab_intensita2}, $\ds \espd{\int_0^u \pi_{r}(\Lambda)\ud r}\leq l_a$, and by \eqref{eq:Vu} and the Burkholder-Davis-Gundy inequality, there exists a positive constant $c$ such that
\begin{align*}
\espu{\sup_{0\leq r\leq u}V^2(u,r)}\leq 4 \left(\espu{V_0^2}+c\espu{\int_0^u\beta^2_s(u)\ud \langle M \rangle_s}+ \left(\int_0^u|\beta_s(u)\alpha_s|\ud \langle M\rangle_s\right)^2+c\espu{\langle A^1 \rangle_u}\right)\\
\leq 4 \left(\espu{V_0^2}+c\espu{\int_0^T\beta^2_s(u)\ud \langle M \rangle_s}+ \left(\int_0^T|\beta_s(u)\alpha_s|\ud \langle M\rangle_s\right)^2+c\espu{\langle A^1 \rangle_T}\right)<\infty.
\end{align*}

Finally,
\begin{align*}
\espq{\int_0^T  \left\{\int_0^u \Gamma_r^2(u)\pi_{r}(\Lambda)\ud r \right\}\ud u}<\infty,
\end{align*}
since $\ds \espd{\int_0^u\Gamma^2_r(u)\pi_r(\Lambda)\ud r}<\infty$ for every $u \in [0,T]$ and the time interval $[0,T]$ is finite.
Indeed the term
\[
\left \{\int_0^t\left\{\int_r^T  V(u,r^-)\Gamma_r(u)\ud u\right\}\left(\ud N_r-\pi_{r^-}(\Lambda)\ud r\right),  \ t \in [0,T]\right\}
\]
is an $(\widetilde \bH, \Q)$-martingale.
These two martingales are orthogonal to $S$, because of the independence between $N$ and $S$.\\
It only remains to verify that the process $\ds \left\{\int_0^t\left\{\int_r^TB_{r^-}(u)\gamma_r(u)\ud u \right\}\ud A^1_r
, \ t \in [0,T]\right\}$  is an $(\widetilde \bH, \Q)$-martingale orthogonal to $S$. Since $\{B_{t^-}(u), \ t \in [0,u]\}$ is $\bG^N$-predictable and $\gamma(u)$ is $\bF$-predictable, then $\ds \left\{\int_t^T B_{t^-}(u)\gamma_t(u)  \ud u, \ t \in [0,T]\right\}$ is $\widetilde{\bH}$-predictable. Moreover,
\begin{align*}
&\espq{\int_0^T \left\{\int_r^TB_{r^-}(u)\gamma_r(u)\ud u\right\}^2 \ud \langle A^1\rangle_r}\leq
T\espq{\int_0^T \left\{\int_r^TB^2_{r^-}(u)\gamma^2_r(u)\ud u\right\} \ud \langle A^1\rangle_r}\\
&\quad =T\espq{\int_0^T \left\{\int_0^uB^2_{r^-}(u)\gamma^2_r(u)\ud \langle A^1\rangle_r\right\}\ud u }\\
&\quad \leq T\espq{\int_0^T \left\{\espd{\sup_{0\leq r\leq u}B^2_{r}(u)}\espu{\int_0^T\gamma^2_r(u)\ud \langle A^1\rangle_r}\right\}\ud u }<\infty
\end{align*}
since,  by \eqref{eq:disug_Bu}, we have $\ds \espd{\sup_{0\leq r\leq u}B^2_{r}(u)}<\infty$.

Therefore, $\ds \left\{\int_0^t\left\{\int_r^TB_{r^-}(u)\gamma_r(u)\ud u \right\}\ud A^1_r
, \ t \in [0,T]\right\}$ is an $(\widetilde \bH, \Q)$-martingale and
\[
\langle S, \int_0^\cdot \left\{\int_r^TB_{r^-}(u)\gamma_r(u)\ud u \right\}\ud A_r^1 \rangle_t =  \int_0^t  \left\{\int_r^TB_{r^-}(u)\gamma_r(u)\ud u \right\}\ud \langle S, A^1\rangle_r=0,
\]
due to the orthogonality between $S$ and $A^1$.

The square integrability of the martingale $K^1$ in \eqref{eq:mgK1} can be proved using similar computation to those used in the proof of Proposition \ref{prop:pseudo_opt_strategy}. In particular we get that $(K^1)^2$ is an $(\widetilde \bH, \Q)$-submartingale, and then

\[
\espq{(K_t^1)^2}\leq \espq{(K_T^1)^2}\leq3\left(\espq{G_T^2}+ \espq{\left(\int_0^T\left\{\int_r^TB_{r^-}(u)\beta_r(u) \ud u\right\}\ud S_r\right)^2}+\espq{G_0^2}\right)<\infty
\]
since $G_T\in L^2(\widetilde{\H}_T, \Q)$, $G_0\in L^2(\widetilde{\H}_0, \Q)$ and $\{\int_t^TB_{t^-}(u)\beta_t(u)\ud u, t \in [0,T]\}\in \Theta(\widetilde\bH)$.

\end{proof}

\end{document}